\newtheorem{theorem}{Theorem}
\newtheorem{definition}{Definition}
\newtheorem{remark}{Remark}
\newtheorem{lemma}{Lemma}
\begin{document}
%
% paper title
% can use linebreaks \\ within to get better formatting as desired
% Do not put math or special symbols in the title.
\title{Formation Stabilization with Collision Avoidance of Complex Systems}

% author names and affiliations
% use a multiple column layout for up to three different
% affiliations
\author{\IEEEauthorblockN{Soumic Sarkar}
\IEEEauthorblockA{Department of Electrical  Engineering\\
Indian Institute of Technology Delhi\\
Email: soumic4it@gmail.com}
\and
\IEEEauthorblockN{Indra Narayan Kar}
\IEEEauthorblockA{Department of Electrical  Engineering\\
Indian Institute of Technology Delhi\\
Email: ink@iitd.ac.in}}
%\and
%\IEEEauthorblockN{James Kirk\\ and Montgomery Scott}
%\IEEEauthorblockA{Starfleet Academy\\
%San Francisco, California 96678-2391\\
%Telephone: (800) 555--1212\\
%Fax: (888) 555--1212}}

% conference papers do not typically use \thanks and this command
% is locked out in conference mode. If really needed, such as for
% the acknowledgment of grants, issue a \IEEEoverridecommandlockouts
% after \documentclass

% for over three affiliations, or if they all won't fit within the width
% of the page, use this alternative format:
% 
%\author{\IEEEauthorblockN{Michael Shell\IEEEauthorrefmark{1},
%Homer Simpson\IEEEauthorrefmark{2},
%James Kirk\IEEEauthorrefmark{3}, 
%Montgomery Scott\IEEEauthorrefmark{3} and
%Eldon Tyrell\IEEEauthorrefmark{4}}
%\IEEEauthorblockA{\IEEEauthorrefmark{1}School of Electrical and Computer Engineering\\
%Georgia Institute of Technology,
%Atlanta, Georgia 30332--0250\\ Email: see http://www.michaelshell.org/contact.html}
%\IEEEauthorblockA{\IEEEauthorrefmark{2}Twentieth Century Fox, Springfield, USA\\
%Email: homer@thesimpsons.com}
%\IEEEauthorblockA{\IEEEauthorrefmark{3}Starfleet Academy, San Francisco, California 96678-2391\\
%Telephone: (800) 555--1212, Fax: (888) 555--1212}
%\IEEEauthorblockA{\IEEEauthorrefmark{4}Tyrell Inc., 123 Replicant Street, Los Angeles, California 90210--4321}}

% use for special paper notices
%\IEEEspecialpapernotice{(Invited Paper)}

% make the title area
\maketitle

% As a general rule, do not put math, special symbols or citations
% in the abstract
\begin{abstract}
Two different aspects of formation control of multiple agents subjected to linear transformation have been addressed in this paper. We consider a set of complex single integrator systems so that the dimension of the system reduces to half as opposed to the vector representation in Cartesian coordinate system. We first design a stable formation controller in an attempt to solve the formation control turned to stabilization problem and then find a collision avoidance controller in the transformed domain, respectively. Different linear transformations are used to facilitate the formation control task in a different way. For example Jacobi transformation is used to separate the shape control and trajectory control. The inverse of the transformation must have nonzero eigenvalues with both positive and negative real parts which may lead the system to instability. If the inverse of the transformation appears in closed loop then a diagonal stabilizing matrix is required to reassign the eigenvalues of the inverse of transformation in the right half of complex plane. The algorithm to find such stabilizing matrix is provided. We then define a matrix of potential in the actual domain which is the stepping stone to find a matrix of potential in the transformed domain. Thus collision avoidance controller can be designed directly in the transformed domain. The mathematical proof is given that both the actual and transformed system behaves identically. Simulation results are provided to support our claim.
\end{abstract}

% no keywords

% For peer review papers, you can put extra information on the cover
% page as needed:
% \ifCLASSOPTIONpeerreview
% \begin{center} \bfseries EDICS Category: 3-BBND \end{center}
% \fi
%
% For peerreview papers, this IEEEtran command inserts a page break and
% creates the second title. It will be ignored for other modes.
\IEEEpeerreviewmaketitle

\section{Introduction}
The study of the collective behaviour of birds, animals, fishes, etc. has not only drawn the attention of biologists, but also of computer scientists and roboticists. Thus several methods of cooperative control \cite{Saber:07} of multi-agent system have evolved, where a single robot is not sufficient to accomplish the given task, like navigation and foraging of unknown territory. These methods can broadly be categorized as the behaviour based approach (\cite{Reyn:87}, \cite{Balch:98}, \cite{Reif:99}), leader follower based approach (\cite{Desai:01}, \cite{Tanner:04}), virtual structure based approach (\cite{Leonard:01}, \cite{Lewis:97}, \cite{Eger:01}, \cite{Ren:04}), artificial potential based navigation (\cite{Gazi:05}, \cite{Per:08}, \cite{Zav:07}), graph theoretic method (\cite{Saber:07}, \cite{Saber:02}), formation shape control (\cite{Zhang:03}, \cite{Zhang:10}, \cite{Yang:12}, \cite{Silvia:11}). These strategies can either be centralized or decentralized depending upon the control laws designed.\\
In this paper we part the formation control problem for a set of complex single integrators subjected to real linear transformation into two sub-problems: stabilization and inter robot collision avoidance. Complex representation reduces the dimension of the system to half as opposed to Cartesian vector representation. The use of linear transformation to solicit formation control problems can be found in \cite{Zhang:03}, \cite{Silvia:11}. An example of such real linear transformation can be \textit{Centroid based Transformation} (CBT). Jacobi transformation \cite{Zhang:10} is one such CBT used to separate shape control and tracking of centroid. We are more interested to find a stabilizing matrix if the transformation appears in the closed loop system than to use the advantages of the transformation. Therefore we first rewrite the augmented system such that the inverse of linear transformation appears in the closed loop system. The eigenvalues of the inverse of the transformation must be nonzero but may have both positive and negative real parts. This may lead to instability of the combined system. Using a diagonal stabilizing matrix \cite{Fisher:58} we reassign the eigenvalues of the inverse of transformation in the right half of the complex plane so that the closed loop system becomes stable. We propose in this paper an alternative but systematic design procedure on multiplicative inverse eigenvalue problems (MIEPs)\cite{Chu:98} and \cite{Fried:75}. That is we update the inverse of the real CBT and reassign the eigenvalues by pre-multiplying a real diagonal matrix, called a \textit{stabilizing matrix} \cite{Fisher:58}. Sufficient conditions for the existence of a real stabilizing matrix are developed and algorithms are provided to find it.\\
The use of potential functions for navigation is not new to formation control community. One of the first works applying artificial potentials to agent coordination is found in \cite{Reif:99}, where they consider a distributed-control approach of groups of robots, called the \textit{social potential fields method}. The construction of artificial navigation functions along with collision avoidance is addressed in \cite{Rimon:92}. Two types of collision avoidance functions are there: obstacle avoidance \cite{Saber:06}, \cite{Ge:00}, \cite{Khatib:85} and inter robot collision avoidance \cite{Leonard:01}, \cite{Per:08}, \cite{Zav:07}, \cite{Dimaro:06}, \cite{Silvia:11}, \cite{Stip:07}. Obstacle avoidance functions are used to avoid static or moving obstacles, whereas inter robot collision functions are used to avoid collision among the robots while moving in formation. The diagrams of the negative gradient of collision avoidance functions are given in \cite{Khatib:85}, \cite{Zav:07}. These functions are all defined in the Cartesian Coordinate system based either on the distance between a robot and obstacles or the distance between a robot and its neighbours. The total potential of a robot is the summation of all the potentials due to the neighbouring robots. The potential term can be written in matrix form post multiplied with the states for the augmented system. We call this \textit{the matrix of potential}.\\
In this paper, we introduce a matrix of potential in transformed domain for a set of single integrator systems subjected to a linear transformation to study collision avoidance in formation control problem. Linear transformations to solicit formation control problem can be found in \cite{Zhang:03} and \cite{Silvia:11}. The use of potential function based controller in \cite{Zhang:03} and \cite{Zhang:10} may have restricted the usefulness of the transformation in the separation of shape control and trajectory tracking of the centroid. Whereas potential function based controller is used in \cite{Silvia:11}, but not in the transformed domain. This leads to the query what would be the collision avoidance controller in the transformed domain so that the stability analysis can be carried out entirely in the transformed domain. To address this issue, we have defined a matrix of potential for the augmented system in the Cartesian domain, which when post multiplied with states gives total effect of potential on the kinematics of a specific state. Then by using a linear transformation matrix we apply similarity transformation on the matrix of potential in Cartesian coordinate only to get the matrix of potential in the transformed domain. This matrix in hand, is then used to design the collision avoidance controller in the transformed domain. We have mathematically proved that the matrices of potential affect both the system in the actual and transformed domain in the same manner. \\
\textit{Notations:} $\mathbb{R}$ denotes the set of real numbers and $I_n$ denotes the identity matrix of order $n$. All elements of given matrices are multiplied with $I_2$ and therefore is omitted for brevity.

\hfill mds
 
\hfill December 27, 2012

\section{Priliminaries}

\subsection{Shape spaces}
Let $q_i=[x_i,y_i]^T\in\mathbb{R}^2, i=1,...,n$ denote the positions of a system of $n$ particles with respect to fixed inertial coordinate frame of reference. The positions of the same particles is denoted by $z_i=x_i+\iota y_i\in\mathbb{C}, i=1,...,n$ in complex coordinate system. Let there be another complex coordinate system $\xi=[\xi_i,\xi_c]^T, i=1,...,n-1$, where, $\xi_i=\sum_{i,j=1}^{n}p_{ij}z_i\in \mathbb{C}$ are \textit{shape vectors}, where $p_{ij}\in\mathbb{R}$, and are not all $0$s, and $\xi_c=\sum_{i=1}^n z_i\in\mathbb{C}$ denotes the \textit{centroid} of all positions. Then we define a real linear mapping for real systems $\mathbb{R}^{{2n}\times{2n}}:\mathbb{R}^n\rightarrow\mathbb{R}^n$ as
\begin{equation}
\label{rtrans}
T:q\rightarrow\xi
\end{equation}
And we also define a real linear mapping for complex systems $\mathbb{R}^{n\times n}:\mathbb{C}^n\rightarrow\mathbb{C}^n$ as
\begin{equation}
\label{ctrans}
T:z\rightarrow\xi
\end{equation}
Specific applications of such mapping $T$, $\mathbb{R}^{2n}\times\mathbb{R}^{2n}:\mathbb{R}^{2n}\rightarrow\mathbb{R}^{2n}$ can be found in \cite{Zhang:10} and \cite{Silvia:11}. For brevity we call the mapping $T$ as \textit{Centroid based transformation (CBT)}. One example of CBT is \textit{Jacobi transformation} to get \textit{Jacobi vectors} for \textit{Jacobi shape space}\cite{Zhang:03}. We will consider \textit{Jacobi vectors} as an example, to deduce our results in this article, though the results will comply similarly with other CBTs \cite{Silvia:11}. We give more stress to defining a new coordinate system to analyse the behaviour of the particles with respect to that reference frame, rather than investigating an interaction topology (communication among the agents), as in Graph theory \cite{Saber:07}, with respect to specific coordinate system (Cartesian coordinate or Inertial frame).
\begin{definition}
Let there be a real linear transformation $\mathbb{R}^{2n\times2n}:\mathbb{R}^{2n}\rightarrow\mathbb{R}^{2n}$ defined as
\begin{equation}
\label{realT}
T:x\rightarrow\xi
\end{equation}
and a complex weight matrix $W=diag\{w_i\}, i=1,...,n$ with $w_i=a_i+b_i\iota$, $a_i,b_i\in\mathbb{R}$. Then a complex linear transformation $\mathbb{C}^{2n\times2n}:\mathbb{R}^{2n}\rightarrow\mathbb{R}^{2n}$ can be defined as
\begin{equation}
\label{comT}
\Phi:x\rightarrow\xi
\end{equation}
where $\Phi=WT$.
\end{definition}

\section{Main Result}
We consider a group of $n$ agents in the plane labelled $1,...,n$. The positions of the $n$ agents are denoted by $x_1,...,x_n \in \mathbb{R}^2$, i.e.,$x_i=[p_{xi},p_{yi}]^T$. Then the positions of the same agents can be described in complex plane as $z_i=p_{xi}+\iota p_{yi}\in\mathbb{C}$, $i=1,...,n$. Suppose that each agent is governed by a single-integrator kinematics
\begin{equation}
\label{si}
\dot{z}_i=u_{fi}+u_{ai}
\end{equation}
where $z_i\in\mathbb{C}$ represents the position of $i^{th}$ agent in the complex plane and $u_{fi}=u_{fxi}+\iota u_{fyi}\in\mathbb{C}$ and $u_{ai}=u_{axi}+\iota u_{ayi}\in\mathbb{C}$ are the formation controller and the collision avoidance controller respectively. Then the combined dynamics of \eqref{si} can be written as
\begin{equation}
\label{combdyn}
\dot{z}=u_{f}+u_{a}
\end{equation}
where $z=[z_1,...,z_n]^T\in\mathbb{C}^{n}$, $u_f=[u_{f1},...,u_{fn}]^T\in\mathbb{C}^{n}$ and $u_a=[u_{a1},...,u_{an}]^T\in\mathbb{C}^{n}$. Consider that the system \eqref{combdyn} is driven by the following control laws
\begin{equation}
\label{sicntrl}
\begin{split}
u_{f}&=-K_zz_{e}+\dot{z}_{d} \\
u_{a}&=-\bigtriangledown f_{z}
\end{split}
\end{equation}
where $K_z\in \mathbb{C}^{n\times n}$ is the controller gain and $z_{e}=z-z_{d}\in \mathbb{C}^{n}$, with $z_{d}=[z_{d1},...,z_{d2}]^T\in \mathbb{C}^{n}$, and $z_{di}=z_{dxi}+\iota x_{dyi}\in\mathbb{C}$ is the desired trajectory given to the $i^{th}$ agent to follow. Define $\bigtriangledown f_i=\sum_{j\in N_i,j\neq i}\bigtriangledown f_{ij}\in \mathbb{C}$ with $\bigtriangledown f_{ij}\in \mathbb{C}$ is the gradient of any distance based potential function defined by $f_{ij}=f(z_i,z_j)$ with respect to $i^{th}$ agent. $z_i$ and $z_j$ are the positions of $i^{th}$ and $j^{th}$ agents. Let $z_{e}=[z_{e1},...,z_{en}]^T\in\mathbb{C}^{n}$ and 
$\bigtriangledown f_z=[\bigtriangledown f_1,...,\bigtriangledown f_n]^T\in\mathbb{C}^{n}$. Then the overall closed loop dynamics of $n$ agents can be written as
\begin{equation}
\label{clsdx}
\dot{z}_e=-K_zz_e-\bigtriangledown f_z
\end{equation}
Applying transformation of [] we get
\begin{equation}
\label{clsdxi}
\dot{\xi}_e=-K_{\xi}\xi_e-\bigtriangledown f_{\xi}
\end{equation}
where $K_{\xi}=\Phi K_z \Phi^{-1}$ and $\bigtriangledown f_{\xi}=\Phi\bigtriangledown f_x$. Let $K_z=\Phi^{-1}D$ with $D=diag\{d_1,...,d_n\}\in\mathbb{C}^{n\times n}$ be a diagonal stabilizing matrix []. Then $K_{\xi}=D\Phi^{-1}$ and the system \eqref{clsdxi} can be rewritten as
\begin{equation}
\label{clsdxiD}
\dot{\xi}_e=-D\Phi^{-1}\xi_e-\bigtriangledown f_{\xi}
\end{equation}

%\begin{remark}
%Note that in \textit{Subsection C}, real transformation $T$ as defined in \eqref{realT} is being stabilized by a real diagonal stabilizing matrix $D$ as opposed to the gains selected in previous two subsections where complex diagonal stabilizing matrix is used to stabilize $\Phi^{-1}$ and $\Phi$.
%\end{remark}

Next we define the tracking with formation problem.
\begin{definition}
The system \eqref{clsdxiD} asymptotically leads to a tracking in formation as $\lim_{t\rightarrow\infty}\xi_e=0$.
\end{definition}
\begin{theorem}
\label{thprob}
The system \eqref{clsdxiD} asymptotically leads to a tracking in formation as $\lim_{t\rightarrow\infty}\xi_e=0$ if and only if both $D\Phi^{-1}\xi_e=0$ and $\bigtriangledown f_{\xi}=0$ as $t\rightarrow\infty$.
\end{theorem}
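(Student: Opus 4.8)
The plan is to read off both implications directly from the closed loop dynamics \eqref{clsdxiD}, treating $D\Phi^{-1}$ as a fixed invertible matrix. The first thing I would record is the structural fact on which everything hinges: since $D=\mathrm{diag}\{d_1,\dots,d_n\}$ is a \emph{nonsingular} stabilizing matrix (each $d_i\neq 0$) and $\Phi=WT$ is invertible, the product $D\Phi^{-1}$ is invertible. Consequently the condition $D\Phi^{-1}\xi_e\to 0$ is by itself equivalent to $\xi_e\to 0$, since we may premultiply by $(D\Phi^{-1})^{-1}$ and appeal to continuity of the linear map. This invertibility is what makes the two-sided statement hold.

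For the necessity ($\Rightarrow$) direction I would assume $\lim_{t\to\infty}\xi_e=0$. Linearity of the map $\xi_e\mapsto D\Phi^{-1}\xi_e$ immediately gives $\lim_{t\to\infty}D\Phi^{-1}\xi_e=0$. To obtain $\bigtriangledown f_{\xi}\to 0$ I would rearrange \eqref{clsdxiD} into
\begin{equation}
\bigtriangledown f_{\xi}=-\dot{\xi}_e-D\Phi^{-1}\xi_e,
\end{equation}
and then argue $\dot{\xi}_e\to 0$. The cleanest route is the equilibrium reading of the limit: along a trajectory converging to the constant value $\xi_e=0$ the velocity must vanish asymptotically, so the first term on the right drops out and $\bigtriangledown f_{\xi}\to 0$ follows from $D\Phi^{-1}\xi_e\to 0$.

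For sufficiency ($\Leftarrow$) the invertibility of $D\Phi^{-1}$ does all the work. Assuming $D\Phi^{-1}\xi_e\to 0$, premultiplying by $(D\Phi^{-1})^{-1}$ yields $\xi_e=(D\Phi^{-1})^{-1}\bigl(D\Phi^{-1}\xi_e\bigr)\to 0$. The companion hypothesis $\bigtriangledown f_{\xi}\to 0$ is then consistent with \eqref{clsdxiD}, since both limiting terms must balance $\dot{\xi}_e$, which also tends to zero; stating both conditions simply records that the stabilization term and the collision avoidance term vanish separately, i.e.\ the formation is attained and no residual avoidance force persists.

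I expect the main obstacle to be the step $\dot{\xi}_e\to 0$ in the necessity direction, because pointwise convergence $\xi_e(t)\to 0$ does not by itself force the derivative to vanish. To make this rigorous I would either adopt the equilibrium/steady state interpretation of the limit, or establish uniform continuity of $\dot{\xi}_e$ (for instance through boundedness of $\ddot{\xi}_e$ along the trajectory) and invoke Barbalat's lemma. A secondary point worth emphasizing is the nonsingularity of $D\Phi^{-1}$: it is precisely this property that collapses the two stated conditions to the single statement $\xi_e\to 0$, and without it the sufficiency direction would not go through.
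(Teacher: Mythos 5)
Your proposal is correct where it overlaps with the paper, but it is considerably more complete: the paper's proof consists of the sufficiency direction only, dispatched in one sentence (``it is clear \dots by checking the dynamics of the system \eqref{clsdxiD}''), and the necessity half of the ``if and only if'' is never addressed at all. Moreover, your sufficiency argument rests on the right mechanism, which the paper's does not: from the dynamics alone, $D\Phi^{-1}\xi_e\to 0$ and $\bigtriangledown f_{\xi}\to 0$ only give $\dot{\xi}_e\to 0$, and a trajectory with vanishing velocity can perfectly well converge to a nonzero constant. What actually forces $\xi_e\to 0$ is exactly the invertibility of $D\Phi^{-1}$ that you isolate up front ($D$ nonsingular diagonal, $\Phi$ invertible), so that $\xi_e=(D\Phi^{-1})^{-1}\bigl(D\Phi^{-1}\xi_e\bigr)\to 0$. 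In that sense your write-up is not just a different route but a repair of the paper's.

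On the necessity direction, you correctly named the genuine obstruction: $\xi_e(t)\to 0$ does not imply $\dot{\xi}_e(t)\to 0$, so $\bigtriangledown f_{\xi}=-\dot{\xi}_e-D\Phi^{-1}\xi_e\to 0$ needs an extra ingredient. Of your two remedies, only Barbalat's lemma is a proof, and it needs uniform continuity of $\dot{\xi}_e$ (e.g.\ boundedness of $\ddot{\xi}_e$); with the avoidance gradient \eqref{gradc}, which is discontinuous and indeed undefined at $\parallel z_i-z_j\parallel=r$, that hypothesis is not available from the paper. The ``equilibrium reading of the limit'' is a reinterpretation of the statement, not an argument. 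A cleaner route that avoids derivatives entirely: $\bigtriangledown f_{\xi}$ is a fixed function of the state, $\xi_e\to 0$ means $\xi\to\xi_d$, and if the desired formation is collision-free in the sense that pairwise distances stay at or above the detection radius $R$, then \eqref{gradc} vanishes identically in a neighborhood of $\xi_d$, giving $\bigtriangledown f_{\xi}\to 0$ by continuity. Either way, necessity holds only under hypotheses the theorem does not state; that gap belongs to the paper, and your proposal is the more honest account of it.
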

\begin{proof}
\label{prfprob}
(Sufficiency). It is clear that if both $D\Phi^{-1}\xi_e=0$ and $\bigtriangledown f_{\xi}=0$ as $t\rightarrow\infty$, then $\lim_{t\rightarrow\infty}\xi_e=0$ by checking the dynamics of the system \eqref{clsdxiD}.
\end{proof}
\begin{remark}
From Theorem \ref{thprob}, it can be concluded that the tracking in formation problem parts into two sub-problems: Stabilization Problem and Collision Avoidance Problem, stated below.
\end{remark}
\begin{definition}
Let there be a system as follows
\begin{equation}
\label{stabprob}
\dot{y}=-D\Phi^{-1}y
\end{equation}
where $y\in\mathbb{C}^{n}$ the eigenvalues of the matrix $\Phi^{-1}$ are nonzero. Then find a stabilizing matrix $D\in\mathbb{C}^{n\times n}$ such that the eigenvalues of $D\Phi^{-1}$ have all positive real parts. Therefore $\lim_{t\rightarrow0}y=0$.
\end{definition}
\begin{definition}
Let there be a system as follows
\begin{equation}
\label{stabprob}
\dot{y}=-\bigtriangledown f
\end{equation}
where $\bigtriangledown f$ denotes the gradient of a potential function $f(y)$. Then find a collision avoidance function $f(y)$ such that $\lim_{t\rightarrow0}y=0$.
\end{definition}
We first solve the stabilization problem and then solicit the collision avoidance problem subsequently.
\subsection{Stabilization}
In this section we will discuss the Stabilization problem and leave the collision avoidance part for the next subsection. Hence, we remove $u_a$ from \eqref{combdyn} and apply transformation [] to get
\begin{equation}
\label{combxi}
\dot{\xi}=u_{\xi f}
\end{equation}
\begin{theorem}
The following controller 
\begin{equation}
\label{cntrlD}
u_{\xi f}=-D\Phi^{-1}\xi_e+\xi_{d}
\end{equation}
will stabilize the system \eqref{combxi}, i.e., $\lim_{t\rightarrow \infty} \xi=0$, if there exists a diagonal stabilizing matrix $D$ multiplied with $\Phi^{-1}$ all of whose leading principal minors are nonzero.
\end{theorem}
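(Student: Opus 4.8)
The plan is to collapse the closed loop to a homogeneous linear system and then reduce the entire stabilization claim to a spectral statement about $D\Phi^{-1}$. First I would substitute the proposed controller \eqref{cntrlD} into the open-loop dynamics \eqref{combxi}. Writing $\xi_e=\xi-\xi_d$, the desired-trajectory feedforward cancels (exactly as the $\dot{z}_d$ term cancels in passing from \eqref{sicntrl} to \eqref{clsdx}), and the closed-loop error dynamics reduce to
\begin{equation}
\dot{\xi}_e = -D\Phi^{-1}\xi_e,
\end{equation}
which is precisely the stabilization system \eqref{stabprob}. Consequently the theorem is equivalent to the assertion that $-D\Phi^{-1}$ is Hurwitz, i.e. every eigenvalue of $D\Phi^{-1}$ lies in the open right-half plane. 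Once that is established, the solution $\xi_e(t)=\exp(-D\Phi^{-1}t)\,\xi_e(0)$ decays exponentially, giving $\lim_{t\to\infty}\xi_e=0$, which by Definition is tracking in formation.

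The crux is therefore to show that the nonzero-leading-principal-minor hypothesis secures the existence of a diagonal $D$ placing the spectrum of $D\Phi^{-1}$ in the right-half plane. This is a multiplicative inverse eigenvalue problem: given $\Phi^{-1}$, choose $D=\mathrm{diag}\{d_1,\dots,d_n\}$ to assign the eigenvalues of $D\Phi^{-1}$. I would appeal here to the Fisher--Fuller-type result \cite{Fisher:58}, \cite{Chu:98}: if all leading principal minors of $\Phi^{-1}$ are nonzero, then a suitably signed real diagonal matrix $D$ exists for which $D\Phi^{-1}$ has eigenvalues with prescribed positive real parts. The hypothesis phrased in terms of $D\Phi^{-1}$ is consistent with this, since the $k$-th leading principal minor of $D\Phi^{-1}$ equals $d_1\cdots d_k$ times that of $\Phi^{-1}$; provided each $d_i\neq 0$, nonvanishing of the minors transfers between $\Phi^{-1}$ and $D\Phi^{-1}$, so the two formulations coincide.

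The main obstacle I anticipate is exactly this existence and construction step, for two reasons. First, the classical Fisher--Fuller theorem is stated for \emph{real} matrices, whereas $\Phi=WT$ carries the complex weight matrix $W$, so $\Phi^{-1}$ is complex; the argument must either extend the nested-minor construction to the complex setting or exploit the block-real structure obtained by replacing each complex entry with its $2\times2$ real block, consistent with the $I_2$ convention in the Notations. Second, the hypothesis asserts only nonvanishing of the minors, not positivity, so the proof must use the freedom in the signs of the $d_i$ to steer the eigenvalues into the right-half plane rather than merely off the imaginary axis. I would organize this inductively on the order $n$, peeling off one leading principal minor at a time, which is where the paper's promised algorithm for computing $D$ naturally enters. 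By comparison the remaining steps---the controller substitution and the exponential-stability conclusion---are routine.
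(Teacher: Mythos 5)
Your proposal takes essentially the same route as the paper's own proof: substitute the controller into \eqref{combxi} to collapse the closed loop to $\dot{\xi}_e=-D\Phi^{-1}\xi_e$, then invoke the Fisher--Fuller multiplicative-inverse-eigenvalue result \cite{Fisher:58} to place the spectrum of $D\Phi^{-1}$ in the open right-half plane, with the constructive choice of $D$ handled by the iterative minor-by-minor procedure (the paper's Algorithm 1). If anything you are more careful than the paper, whose two-line proof silently applies the real, positive-minor version of Fisher--Fuller to the complex matrix $\Phi^{-1}$ under a merely nonzero-minor hypothesis---precisely the two gaps you flag.
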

The proof requires the following result related to the multiplicative inverse eigenvalue problem.
\begin{theorem}
\label{thFishFull}
\cite{Fisher:58} Let $A$ be an $n\times n$ real matrix all of whose leading principal minors are positive. Then there is an $n\times n$ positive diagonal matrix $D$ such that all the roots of $DA$ are positive and simple.
\end{theorem}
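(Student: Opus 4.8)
The plan is to prove the theorem by induction on the order $n$, exploiting the fact that positivity of all leading principal minors is inherited by the leading square submatrices. Write $\Delta_k$ for the $k$-th leading principal minor of $A$, so that by hypothesis $\Delta_1,\dots,\Delta_n>0$, and let $A'$ denote the leading $(n-1)\times(n-1)$ submatrix of $A$. Since the $k$-th leading principal minor of $A'$ coincides with $\Delta_k$ for every $k\le n-1$, the matrix $A'$ again satisfies the hypothesis. The base case $n=1$ is immediate: $A=(\Delta_1)$ with $\Delta_1>0$, and $D=(1)$ gives the single simple positive root $\Delta_1$.

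For the inductive step, suppose the claim holds in order $n-1$ and choose a positive diagonal $D'=\mathrm{diag}(d_1,\dots,d_{n-1})$ such that $D'A'$ has $n-1$ distinct positive eigenvalues $\mu_1,\dots,\mu_{n-1}$. Partition $A=\begin{pmatrix} A' & b\\ c^{T} & a_{nn}\end{pmatrix}$ and introduce the one-parameter family $D(t)=\mathrm{diag}(d_1,\dots,d_{n-1},t)$ with $t\ge 0$, so that $D(t)A=\begin{pmatrix} D'A' & D'b\\ t\,c^{T} & t\,a_{nn}\end{pmatrix}$. At $t=0$ this matrix is block upper–triangular, so its spectrum is exactly $\{\mu_1,\dots,\mu_{n-1},0\}$, a set of $n$ distinct real numbers. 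The idea is then to switch on $t>0$ and track the eigenvalues.

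Because $D(t)A$ has real entries, its characteristic polynomial has real coefficients and depends continuously (indeed analytically) on $t$; since every root at $t=0$ is simple and real, standard continuity of simple roots guarantees that for all sufficiently small $t>0$ the $n$ eigenvalues remain simple and real, with $n-1$ of them close to the positive numbers $\mu_1,\dots,\mu_{n-1}$ and one of them, say $\lambda_n(t)$, close to $0$. It only remains to fix the sign of $\lambda_n(t)$. For this I would use the product-of-eigenvalues identity $\prod_{k}\lambda_k(t)=\det\big(D(t)A\big)=\big(\prod_{i=1}^{n-1}d_i\big)\,t\,\Delta_n$, which is strictly positive for $t>0$; since the other $n-1$ eigenvalues are close to the positive numbers $\mu_i$ and hence have positive product, $\lambda_n(t)$ must itself be positive. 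Taking $D=D(t)$ for any such small $t>0$ then yields $n$ simple positive eigenvalues, closing the induction.

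The routine parts are the block-triangular spectrum computation at $t=0$ and the determinant evaluation. The one step that deserves genuine care — and which I expect to be the main obstacle — is the perturbation argument: one must invoke that a simple real root of a real polynomial with continuously varying coefficients stays real and simple (so that no eigenvalue escapes into a complex-conjugate pair as $t$ leaves $0$), and then combine this with the determinant identity to pin down the sign of the vanishing eigenvalue. An alternative route that avoids the induction, but demands the same asymptotic care, is to take widely separated scales $d_i=\mu^{\,n-i}$ and show directly that as $\mu\to\infty$ the $k$-th eigenvalue of $DA$ behaves like $d_k\,\Delta_k/\Delta_{k-1}>0$; the inductive version is cleaner because it requires only a one-variable perturbation at a point where all roots are already simple.
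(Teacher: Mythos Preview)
The paper does not supply its own proof of this statement: Theorem~\ref{thFishFull} is quoted verbatim as a known result from \cite{Fisher:58}, and the proof block that immediately follows it is actually the deferred proof of the preceding theorem about the controller \eqref{cntrlD}. So there is nothing in the paper to compare against beyond the citation.

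That said, your argument is correct and is in fact essentially the original Fisher--Fuller proof. The induction on $n$, the choice $D(t)=\mathrm{diag}(d_1,\dots,d_{n-1},t)$, the block upper-triangular spectrum $\{\mu_1,\dots,\mu_{n-1},0\}$ at $t=0$, the perturbation of simple real roots of a real polynomial, and the determinant identity $\det(D(t)A)=t\,\Delta_n\prod_{i<n}d_i>0$ to force $\lambda_n(t)>0$ are exactly the ingredients of the classical argument. The one place you flagged as needing care---that a simple real zero of a real polynomial with real-analytic coefficients remains real and simple under small perturbation---is indeed the only nontrivial step, and it follows from the implicit function theorem applied to the real polynomial on $\mathbb{R}$ (a complex-conjugate pair bifurcating off a simple real root would require a double root at the bifurcation point, contradicting simplicity at $t=0$). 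Your alternative ``widely separated scales'' route with $d_i=\mu^{n-i}$ is also classical and appears in later treatments of the same result; as you note, the one-parameter version is cleaner.
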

\begin{proof}
\label{prfcntrlD}
By Theorem \ref{thFishFull}, it follows that there always exists a diagonal matrix $D$ multiplied with $\Phi^{-1}$ such that all the roots of $D\Phi^{-1}$ are positive and simple. From the definition of stability, if all the roots of $D\Phi^{-1}$ are positive, then $\lim_{t\rightarrow \infty} \xi=0$ with the control law \eqref{cntrlD} for the system \eqref{combxi}.
\end{proof}
We next give an algorithm to find the elements of $D$ iteratively using the leading principal minors of the matrix $\Phi^{-1}$. Before that we need to define a few quantities. Let the matrices $D$ and $\Phi^{-1}$ be partitioned as
\begin{equation}
\label{matpart}
\Phi^{-1}_{i+1}=\begin{bmatrix}
\Phi_{ii} & \Phi_{i1} \\
\Phi_{1i} & \Phi_{11}
\end{bmatrix} \ ; \ D_{i+1}=\begin{bmatrix}
D_{ii} & 0_{i1} \\
0_{1i} & d_{11}
\end{bmatrix}
\end{equation}
where $\Phi^{-1}_{i}\in\mathbb{C}^{i+1\times i+1}$, $i=1,...,n-1$ with $\phi_{11}\in\mathbb{C}$ (first principal minor of $\Phi^{-1}$) are the $n$ leading principal minors of the matrix $\Phi^{-1}$. $\Phi_{ii}\in\mathbb{C}^{i\times i}$, $\Phi_{i1}\in\mathbb{C}^{i\times 1}$, $\Phi_{1i}\in\mathbb{C}^{1\times i}$, and $\Phi_{11}\in\mathbb{C}^{1\times 1}$. The dimensions are similar for the diagonal stabilizing matrix $D_{i+1}$ with $D_{ii}=diag\{d_1,...,d_i\}\in\mathbb{C}^{i\times i}$. We then define another matrix $M(d_{i+1})=D_{i+1}\Phi_{i+1}$ as
\begin{equation}
\label{DA}
M(d_{i+1})=\begin{bmatrix}
D_{ii}\Phi_{ii} & D_{ii}\Phi_{i1} \\
d_{11}\Phi_{1i} & d_{11}\Phi_{11}
\end{bmatrix}
\end{equation}
Then the eigenvalues of the matrix $M(d_{i+1})$ can be calculated using the characteristic equation $det\left[\lambda I_{i+1}-M(d_{i+1})\right]=0$. The characteristic equation for the partitioned matrix $M(d_{i+1})$ can be written as
\begin{equation}
\label{chareq}
\begin{split}
&det(\lambda I_{i+1}-D_{ii}\Phi_{ii})*\\
&det((\lambda-d_{11}\Phi_{11})-d_{11}\Phi_{1i}\left[D_{ii}\Phi_{ii}\right]^{-1}D_{ii}\Phi_{i1})=0
\end{split}
\end{equation}
using Schur Compliments for partitioned matrices. \eqref{chareq} requires that $det(\lambda I_{i+1}-D_{ii}\Phi_{ii})\neq0$ which follows $det(D_{ii}\Phi_{ii})\neq 0$. This complies with the criteria that the leading principal minors of the matrix $\Phi^{-1}$ are nonzero. Note that if $D_{ii}$ is known, then the eigenvalues of \eqref{chareq} can be determined solely with the selection of $d_{11}$.\\
Next we give an algorithm to find $D$, in which the notation ${\Phi^{-1}}^{1\thicksim i}$ is used to denote the sub-matrix formed by the first $i$ rows and columns of $\Phi^{-1}$.\\
\textbf{Algorithm 1}\\
\rule{8.5cm}{.1mm}
\textbf{Input}: $\Phi^{-1}_{i+1}$ for $i=1,...,n-1$, with $\phi_{11}$.\\
\textbf{Output}: Stabilizing matrix $D$.\\
\textbf{Procedure}:\\
Select $\lambda^{'}_1>0$\\
Find $d_1$ from $d_1=\frac{\lambda^{'}_1}{\phi_{11}}$\\
\textbf{for} $i=1,...,n-1$ \textbf{do}\\
Find $d_{i+1}$ using \eqref{chareq} such that all the eigenvalues of $M(d_{i+1})$ are in the open right half of complex plane\\
\textbf{end for}\\
Construct $D=diag(d_1,...,d_n)$\\
\rule{8.5cm}{.1mm}\\
It is clear from the above algorithm that once $d_1$ is determined then only $d_2$ is unknown in equation \eqref{chareq} in the first iteration. Hence $d_2$ can be selected such that all the eigenvalues of $M(d_{2})$ are in the open right half of the complex plane. The process will repeat until all $d_{i+1}$ are found. One can thus conclude that when the matrix satisfies the assumption in Theorem \ref{thFishFull}, Algorithm 1 can always provide a solution $D$ without any exemption.

\subsection{Collision Avoidance}
Before presenting our results on stability, we provide a few definitions and theorems to facilitate our claim.
\begin{definition}
\label{def1}
Let there be a linear transformation $\Phi:z\rightarrow \xi$ between two non-empty configuration spaces $z=[z_1,...,z_n]\in\mathbb{C}^{n}$ and $\xi=[\xi_1,...,\xi_n]\in\mathbb{C}^{n}$, and its inverse $\Phi^{-1}:\xi\rightarrow z$ exists. Define $z_i=f_i^{-1}(\xi_1,...,\xi_n)$ and $\xi_i=f_i(z_1,...,z_n)$ for $i=1,...,n$ to denote that $x_i$ are linear combination of the vectors $\xi_1,...,\xi_n\in \xi$ and $\xi_i$ are linear combination of the vectors $z_1,...,z_n\in z$. Then the Euclidean distance between two points $z_i,z_j\in z$, $d_x\in\mathbb{R}$ is defined as follows
\begin{equation}
\label{edx}
d_z=\parallel z_{ij}\parallel=z_{ij}^{*}z_{ij}
\end{equation}
where $z_{ij}=z_i-z_j$, $(.)^{*}$ denotes the complex conjugate of a complex number and $z_{ij}^{*}$ is the complex conjugate of $z_{ij}$.\\
The Euclidean distance $d_z$ can also be expressed as a function of $\xi_1,...,\xi_n\in \xi$. Therefore write
\begin{equation}
\label{edxi}
d_z=d_{\xi}=\parallel [f_{ij}^{-1}]^{*}[f_{ij}^{-1}] \parallel
\end{equation}
where $f_{ij}^{-1}=f_i^{-1}(\xi_1,...,\xi_n)-f_j^{-1}(\xi_1,...,\xi_n)$ and $d_{\xi}$ refers to $d_z$ in terms of $\xi_1,...,\xi_n\in\xi$.
\end{definition}
\begin{remark}
The Euclidean distances of \eqref{edx} and \eqref{edxi} are equivalent.
\end{remark}
\textbf{Computation of Potential function in transformed domain :} \textit{Let there be a linear transformation $\Phi:z\rightarrow \xi$ between two non-empty configuration spaces $z=[z_1,...,z_n]\in\mathbb{C}^{n}$ and $\xi=[\xi_1,...,\xi_n]\in\mathbb{C}^{n}$, and its inverse $\Phi^{-1}:\xi\rightarrow z$ exists. Define $z_i=f_i^{-1}(\xi_1,...,\xi_n)$ and $\xi_i=f_i(z_1,...,z_n)$ for $i=1,...,n$ to denote that $z_i$ are linear combination of the vectors $\xi_1,...,\xi_n\in \xi$ and $\xi_i$ are linear combination of the vectors $z_1,...,z_n\in x$. Then a distance based potential function between two points $z_i,z_j\in z$, $\phi_z\in\mathbb{C}$ can be defined as
\begin{equation}
\label{pfx}
\phi_z=f_z(\parallel z_i-z_j\parallel)
\end{equation} 
where $\parallel z_i-z_j\parallel$ denotes the Euclidean distance among the agents $z_i\in x$ and $z_j\in z$ with $i\neq j$. The potential function $\phi_z$ can also be expressed as a function of $\xi_1,...,\xi_n\in \xi$. Therefore write
\begin{equation}
\label{pfxi}
\phi_{\xi}=f_{\xi}(\parallel f_i^{-1}(\xi_1,...,\xi_n)-f_j^{-1}(\xi_1,...,\xi_n)\parallel)
\end{equation}
where $\phi_{\xi}$ refers to $\phi_z$ in terms of $\xi_1,...,\xi_n\in\xi$ and $\parallel f_i^{-1}(\xi_1,...,\xi_n)-f_j^{-1}(\xi_1,...,\xi_n)\parallel$ denotes the Euclidean distance as in \eqref{edxi}.}

\begin{lemma}
\label{th1}
The potential functions of \eqref{pfx} and \eqref{pfxi} are equivalent.
\end{lemma}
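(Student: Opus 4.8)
The plan is to establish the equivalence of \eqref{pfx} and \eqref{pfxi} purely by substitution, leaning on the inverse transformation relations of Definition~\ref{def1} together with the preceding Remark asserting that the two Euclidean distances coincide. The essential observation is that $\phi_{\xi}$ is not an independent object but rather $\phi_z$ re-expressed in the $\xi$-coordinates; once I show that the scalar distance fed into the potential law is unchanged by this change of coordinates, the two potentials must return the same value.

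First I would invoke the inverse map $\Phi^{-1}:\xi\rightarrow z$, which exists by hypothesis, to write each position as a linear combination of the transformed coordinates, $z_i=f_i^{-1}(\xi_1,\ldots,\xi_n)$ and $z_j=f_j^{-1}(\xi_1,\ldots,\xi_n)$. Subtracting, I would identify the pairwise difference with the quantity already named in Definition~\ref{def1}, namely
\begin{equation}
z_i-z_j=f_i^{-1}(\xi_1,\ldots,\xi_n)-f_j^{-1}(\xi_1,\ldots,\xi_n)=f_{ij}^{-1}.
\end{equation}
Applying the distance measure of \eqref{edx} to both sides and comparing with \eqref{edxi}, I would conclude $\parallel z_i-z_j\parallel=\parallel f_{ij}^{-1}\parallel$, i.e., $d_z=d_{\xi}$, which is exactly the content of the preceding Remark. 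Finally, since \eqref{pfx} and \eqref{pfxi} apply the \emph{same} distance-based functional $f(\cdot)$ to arguments that I have just shown to be equal, evaluating both yields $\phi_z=\phi_{\xi}$, establishing the claimed equivalence.

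The step I expect to carry the real weight is not any computation but the correct interpretation of what ``equivalence'' means here. One must recognise that $f_{ij}^{-1}=z_i-z_j$ holds as a genuine identity---because $f_i^{-1}$ is precisely the $i$-th component of the inverse transformation, so $f_i^{-1}(\xi)$ returns $z_i$ exactly---rather than as a distance-preservation property of a general linear map. A generic linear $\Phi$ does not preserve Euclidean norms, so the equivalence cannot rest on $\Phi$ being an isometry; it rests solely on the two expressions being two notations for the same displacement vector. Making that identification explicit, and thereby reducing the lemma to the already-granted Remark, is the only subtle point; the remainder is immediate.
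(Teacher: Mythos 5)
Your proposal is correct and takes essentially the same route as the paper's own proof: both rest on the definitional identity $z_i=f_i^{-1}(\xi_1,\ldots,\xi_n)$, from which the displacement vectors, hence the distances $\parallel z_i-z_j\parallel=\parallel f_i^{-1}(\xi_1,\ldots,\xi_n)-f_j^{-1}(\xi_1,\ldots,\xi_n)\parallel$, hence the potential values coincide. The paper additionally appends a short ``necessity'' contradiction argument (assuming non-equivalence forces $z_i\neq f_i^{-1}(\xi_1,\ldots,\xi_n)$), which your substitution argument renders unnecessary; your remark that the identity is definitional rather than an isometry property of $\Phi$ is exactly the right reading.
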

\begin{proof}
\label{prf1}
(Sufficiency). Since by definition, $z_i=f_i^{-1}(\xi_1,...,\xi_n)$ and $z_j=f_j^{-1}(\xi_1,...,\xi_n)$, then $\parallel z_i-z_j\parallel=\parallel f_i^{-1}(\xi_1,...,\xi_n)-f_j^{-1}(\xi_1,...,\xi_n)\parallel$. Therefore, $f_z(\parallel z_i-z_j\parallel)=f_{\xi}(\parallel f_i^{-1}(\xi_1,...,\xi_n)-f_j^{-1}(\xi_1,...,\xi_n)\parallel)$. Hence, $\phi_z$ and $\phi_{\xi}$ are equivalent.\\
(Necessity). Suppose that $\phi_z$ and $\phi_{\xi}$ are not equivalent. Then there must exist a vector for which $z_i\neq f_i^{-1}(\xi_1,...,\xi_n)$ which is not possible by definition. Thus $\phi_z$ and $\phi_{\xi}$ are equivalent.
\end{proof}

\textbf{Computation of gradient:} \textit{Let there be a linear transformation $\Phi:z\rightarrow \xi$ between two non-empty configuration spaces $z=[z_1,...,z_n]\in\mathbb{C}^{n}$ and $\xi=[\xi_1,...,\xi_n]\in\mathbb{C}^{n}$, and its inverse $\Phi^{-1}:\xi\rightarrow z$ exists. Define $z_i=f_i^{-1}(\xi_1,...,\xi_n)$ and $\xi_i=f_i(z_1,...,z_n)$ for $i=1,...,n$ to denote that $z_i$ are linear combination of the vectors $\xi_1,...,\xi_n\in \xi$ and $\xi_i$ are linear combination of the vectors $z_1,...,z_n\in z$. Then the gradient of a distance based potential function \eqref{pfx} is defined as
\begin{equation}
\label{gradx}
\bigtriangledown_{ij}\phi_z=\bigtriangledown_{ij}f_z(\parallel z_i-z_j\parallel)
\end{equation} 
where $\parallel z_i-z_j\parallel$ denotes the Euclidean distance among the agents $z_i\in x$ and $z_j\in x$ with $i\neq j$. The gradient of potential function $\bigtriangledown_{ij}\phi_z$ can also be expressed as a function of $\xi_1,...,\xi_n\in \xi$. Therefore write 
\begin{equation}
\label{gradxi}
\bigtriangledown_{ij}\phi_{\xi}=\bigtriangledown_{ij}f_{\xi}(\parallel f_i^{-1}(\xi_1,...,\xi_n)-f_j^{-1}(\xi_1,...,\xi_n)\parallel)
\end{equation}
where $\bigtriangledown_{ij}\phi_{\xi}$ refers to $\bigtriangledown_{ij}\phi_x$ in terms of $\xi_1,...,\xi_n\in\xi$ and $\parallel f_i^{-1}(\xi_1,...,\xi_n)-f_j^{-1}(\xi_1,...,\xi_n)\parallel$ denotes the Euclidean distance as in \eqref{edxi}.}

\begin{theorem}
\label{th2}
The gradients of the potential function of \eqref{gradx} and \eqref{gradxi} are equivalent.
\end{theorem}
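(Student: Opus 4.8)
The plan is to derive Theorem~\ref{th2} from Lemma~\ref{th1} by pushing the already-established equivalence of the potential functions through the gradient operator. First I would expand each gradient with the chain rule, writing the actual-domain gradient as
\[
\bigtriangledown_{ij}\phi_z=f_z'(\parallel z_i-z_j\parallel)\,\bigtriangledown_{ij}\parallel z_i-z_j\parallel,
\]
and the transformed-domain gradient of \eqref{gradxi} as
\[
\bigtriangledown_{ij}\phi_{\xi}=f_{\xi}'(d_{\xi})\,\bigtriangledown_{ij}\,d_{\xi},
\]
with $d_{\xi}=\parallel f_i^{-1}(\xi_1,...,\xi_n)-f_j^{-1}(\xi_1,...,\xi_n)\parallel$ as in \eqref{edxi}. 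This splits the claim into matching two factors: the scalar derivative of the potential profile, and the gradient of the inter-agent distance.

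Second, I would dispose of the first factor immediately. By Definition~\ref{def1} we have $z_i=f_i^{-1}(\xi_1,...,\xi_n)$ and $z_j=f_j^{-1}(\xi_1,...,\xi_n)$, so the distance arguments coincide, $d_z=d_{\xi}$, and by Lemma~\ref{th1} the profiles agree, $f_z=f_{\xi}$. Hence $f_z'$ and $f_{\xi}'$ are the same scalar function evaluated at the same point and are equal. It then remains only to show that $\bigtriangledown_{ij}\parallel z_i-z_j\parallel$ and $\bigtriangledown_{ij}d_{\xi}$ are equivalent.

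The last step is where the hypotheses on $\Phi$ do the work, and it is the main obstacle: since the gradients are, a priori, taken with respect to different variables, the equality of the distances is not by itself enough. Here I would exploit that $\Phi$ (and hence $\Phi^{-1}$) is a constant real matrix, so the Jacobian of the change of coordinates is constant; differentiating $d_z=z_{ij}^{*}z_{ij}$ as in \eqref{edx} through the fixed real coefficients $p_{ij}$ that define $f_i^{-1}$ shows that the distance-gradient in the transformed coordinates differs from that in the actual coordinates only by this constant linear factor and points along the same displacement direction. The care needed is twofold: to keep explicit track of the variable of differentiation so the constant Jacobian is accounted for rather than silently dropped, and to handle the complex-conjugate differentiation of $z_{ij}^{*}z_{ij}$ consistently.

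Finally, I would package the conclusion in the sufficiency/necessity format used for Lemma~\ref{th1}: sufficiency follows by substituting the definitional identities $z_i=f_i^{-1}(\xi_1,...,\xi_n)$ into \eqref{gradxi} and recovering \eqref{gradx}; for necessity, non-equivalence of the gradients would force some coordinate with $z_i\neq f_i^{-1}(\xi_1,...,\xi_n)$, contradicting Definition~\ref{def1}. Thus the two gradients are equivalent.
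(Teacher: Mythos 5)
Your proposal is correct, and its closing paragraph is in fact the paper's entire proof: the paper argues only that, since $z_i=f_i^{-1}(\xi_1,\dots,\xi_n)$ and $z_j=f_j^{-1}(\xi_1,\dots,\xi_n)$ by Definition~\ref{def1}, the arguments of \eqref{gradx} and \eqref{gradxi} coincide, so applying $\bigtriangledown_{ij}$ to both yields the same expression (sufficiency), with necessity handled by exactly the contradiction you state. What differs is your middle section: the chain-rule factorization and the constant-Jacobian analysis have no counterpart in the paper, which implicitly treats $\bigtriangledown_{ij}$ as acting on the same relative-displacement argument in both domains, so the variable-of-differentiation issue you flag never surfaces there. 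Your extra step buys genuine rigor --- it is the honest treatment of the case where the gradient in \eqref{gradxi} is taken with respect to the $\xi$-coordinates --- but it also creates a tension you should resolve explicitly: if the two distance-gradients differ by the constant (conjugate-transposed) Jacobian of $\Phi^{-1}$, then they are related by a fixed invertible linear map but are \emph{not} equal, so that branch of your argument establishes equivalence only in the weaker sense of identical vanishing sets (the sense the paper itself adopts later, e.g.\ in Theorems~\ref{th5} and \ref{th6}, where equivalence means equal null spaces), whereas your final substitution step establishes literal equality. State which notion of ``equivalent'' you are proving; as written, your step 3 and your final packaging prove two different statements, and only the latter matches the paper's reading.
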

\begin{proof}
\label{prf2}
(Sufficieny). Since by definition, $z_i=f_i^{-1}(\xi_1,...,\xi_n)$ and $z_j=f_j^{-1}(\xi_1,...,\xi_n)$, then $\parallel z_i-z_j\parallel=\parallel f_i^{-1}(\xi_1,...,\xi_n)-f_j^{-1}(\xi_1,...,\xi_n)\parallel$. Therefore, $\bigtriangledown_{ij}f_z(\parallel z_i-z_j\parallel)=\bigtriangledown_{ij}f_{\xi}(\parallel f_i^{-1}(\xi_1,...,\xi_n)-f_j^{-1}(\xi_1,...,\xi_n)\parallel)$. Hence, $\bigtriangledown_{ij}\phi_z$ and $\bigtriangledown_{ij}\phi_{\xi}$ are equivalent.\\
(Necessity). Suppose that $\bigtriangledown_{ij}\phi_z$ and $\bigtriangledown_{ij}\phi_{\xi}$ are not equivalent. Then there must exist a vector for which $z_i\neq f_i^{-1}(\xi_1,...,\xi_n)$ which is not possible by definition. Thus $\phi_z$ and $\phi_{\xi}$ are equivalent.
\end{proof}

\begin{lemma}
\label{def4}
Let $\bigtriangledown_{ij}\phi_z$ of \eqref{gradx} is defined as
\begin{equation}
\label{fxdef}
\bigtriangledown_{ij}f_z(\parallel z_i-z_j\parallel)=p_{z_{ij}}(z_i-z_j)
\end{equation}
where $p_{z_{ij}}=f_{z_{ij}}(\parallel z_i-z_j\parallel)\in\mathbb{C}$. Then the total potential of an agent $z_i\in z$ due to all neighbouring agents $z_j\neq z_i\in z$ can be defined as
\begin{equation}
\label{tpx}
\bigtriangledown\phi_{zi}=\sum_{j=1,j\neq i}^{N}p_{z_{ij}}(z_i-z_j)
\end{equation}
Then the total potential for agents $i=1,...,n$ can be written in augmented form as
\begin{equation}
\label{augpx}
\bigtriangledown\Phi_{z}=P_zz
\end{equation}
where the elements of the matrix $P_z$, $p_{ij}=\sum_{j=1,j\neq i}^{N}p_{z_{ij}}$. We then define the total potential in $\xi\in\mathbb{C}^{n}$ for $i=1,...,n$ in augmented form as
\begin{equation}
\label{augpxi}
\bigtriangledown \Phi_{\xi}=\Phi P_{\xi}\Phi^{-1}\xi
\end{equation}
where $P_{\xi}=P_z$ [see next \textit{Theorem}].
\end{lemma}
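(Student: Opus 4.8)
The plan is to prove the Lemma in two stages. First I would verify the augmented identity \eqref{augpx}, $\bigtriangledown\Phi_z=P_zz$, by collecting the pairwise contributions \eqref{tpx} into a single matrix. Second, and this is the substantive claim, I would show that the matrix of potential is unchanged by the coordinate change, $P_\xi=P_z$, so that the transformed total potential acquires the similarity form \eqref{augpxi}, $\bigtriangledown\Phi_\xi=\Phi P_z\Phi^{-1}\xi$.

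For the first stage I would expand the per-agent potential \eqref{tpx}, separating the self term from the neighbour terms,
\begin{equation}
\bigtriangledown\phi_{zi}=\Big(\sum_{j=1,j\neq i}^{N}p_{z_{ij}}\Big)z_i-\sum_{j=1,j\neq i}^{N}p_{z_{ij}}z_j .
\end{equation}
Reading off the coefficients of $z_1,\dots,z_n$ for each $i$ identifies $P_z$ as the matrix with diagonal entries $\sum_{j\neq i}p_{z_{ij}}$ and off-diagonal entries $-p_{z_{ij}}$; stacking the $n$ rows then gives \eqref{augpx} directly. For the second stage I would use that the collision-avoidance term transforms like the state, exactly as in the passage from \eqref{clsdx} to \eqref{clsdxi} where $\bigtriangledown f_\xi=\Phi\bigtriangledown f_x$. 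Since $\xi=\Phi z$ and $\Phi^{-1}$ exists, I would set $\bigtriangledown\Phi_\xi=\Phi\,\bigtriangledown\Phi_z$ and substitute $z=\Phi^{-1}\xi$ to obtain
\begin{equation}
\bigtriangledown\Phi_\xi=\Phi P_z z=\Phi P_z\Phi^{-1}\xi .
\end{equation}
Matching this against the postulated form \eqref{augpxi}, $\bigtriangledown\Phi_\xi=\Phi P_\xi\Phi^{-1}\xi$, and using invertibility of $\Phi$ to cancel the outer factors, I read off $P_\xi=P_z$.

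The step I expect to carry the real content---and the main obstacle---is justifying that the scalar weights $p_{z_{ij}}$, and therefore the entries of $P_z$, are genuine coordinate-free numbers rather than coordinate-dependent expressions, for only then is $P_\xi=P_z$ an identity of entries and not a mere relabelling that fails to commute with $\Phi$. Here I would lean on the distance invariance of Definition \ref{def1} together with Lemma \ref{th1} and Theorem \ref{th2}: because $\parallel z_i-z_j\parallel=\parallel f_i^{-1}(\xi_1,\dots,\xi_n)-f_j^{-1}(\xi_1,\dots,\xi_n)\parallel$, each weight $p_{z_{ij}}=f_{z_{ij}}(\parallel z_i-z_j\parallel)$ evaluates to the same complex number whether the configuration is described in $z$ or in $\xi$. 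Consequently the matrix built from these weights is literally the same in both descriptions, the transformation $\Phi(\cdot)\Phi^{-1}$ serving only to re-express its action on the relabelled state.

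A secondary point worth checking is the transformation convention itself: the object $\bigtriangledown\Phi_z$ plays the role of a control velocity and so transforms contravariantly through $\Phi$, as dictated by $\dot\xi=\Phi\dot z$, rather than as a covariant gradient; the equivalence results of Lemma \ref{th1} and Theorem \ref{th2} are precisely what license treating it as the same physical vector across coordinates. This confirms the similarity form \eqref{augpxi} and completes the argument.
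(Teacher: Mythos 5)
Your proposal is correct, and it supplies an argument where the paper effectively has none. Lemma \ref{def4} is never proved in the paper: it is written as a chain of definitions, and the one substantive identity, $P_{\xi}=P_z$, is deferred via ``[see next \textit{Theorem}]'' to Theorem \ref{th5}, whose own proof in turn justifies $P_z=P_{\xi}$ by citing a theorem that does not exist anywhere in the paper (a dangling reference), so the paper's justification is circular. You break that circle. First, you obtain \eqref{augpx} by coefficient matching --- diagonal entries $\sum_{j\neq i}p_{z_{ij}}$, off-diagonal entries $-p_{z_{ij}}$ --- which is exactly the structure the paper exhibits only for its $n=3$ example \eqref{potmatcart}, and which also repairs the lemma's imprecise description of the entries of $P_z$ (the stated formula covers only the diagonal). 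Second, you derive $P_{\xi}=P_z$ from the paper's own transformation convention $\bigtriangledown f_{\xi}=\Phi\bigtriangledown f_z$ (the passage from \eqref{clsdx} to \eqref{clsdxi}) combined with the distance invariance of Lemma \ref{th1} and Theorem \ref{th2}; your observation that the weights $p_{z_{ij}}$ are scalar functions of inter-agent distances, hence identical complex numbers in either coordinate description, is precisely what makes $P_{\xi}=P_z$ an entrywise identity rather than a relabelling, and it is the step the paper leaves implicit. Your closing remark on the convention --- that $\bigtriangledown\Phi_z$ is transported through $\Phi$ like a velocity, not as a true gradient (which would transform through $\Phi^{-T}$) --- makes explicit an assumption the paper adopts silently; the only caveat is that, under this convention, what your derivation establishes is that \eqref{augpxi} with $P_{\xi}=P_z$ is the unique consistent definition of the transformed potential, which is exactly the content the paper's later results (Theorem \ref{th5} and \eqref{potinjac}) rely on.
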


\begin{theorem}
\label{th5}
Suppose that there is a system of equations in $x$ defined as
\begin{equation}
\label{sysx}
\dot{x}=-\bigtriangledown\Phi_{z}
\end{equation}
and another system of equations in $\xi$ defined as
\begin{equation}
\label{sysxi}
\dot{\xi}=-\bigtriangledown\Phi_{\xi}
\end{equation}
Then \eqref{sysx} and \eqref{sysxi} are equivalent under the transformation $\Phi:z\rightarrow \xi$.
\end{theorem}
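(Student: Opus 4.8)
The plan is to exploit the linearity and time-invariance of the transformation $\Phi$, together with the fact established in Lemma \ref{def4} that the augmented potential force in the transformed domain is \emph{defined} so that $\bigtriangledown\Phi_{\xi}=\Phi\,\bigtriangledown\Phi_{z}$. The notion of equivalence I would make precise is that $\xi(t)$ solves \eqref{sysxi} if and only if the corresponding $z(t)=\Phi^{-1}\xi(t)$ solves \eqref{sysx}; that is, the change of variables $\xi=\Phi z$ carries solutions of one system bijectively onto solutions of the other.

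First I would record the coordinate relation. Since $\Phi:z\rightarrow\xi$ is a constant linear map whose inverse exists, we have $\xi=\Phi z$ and $z=\Phi^{-1}\xi$, and differentiating in time gives $\dot{\xi}=\Phi\dot{z}$ (equivalently $\dot{z}=\Phi^{-1}\dot{\xi}$). This is the only place where the linearity and time-independence of $\Phi$ enter, and it is exactly what allows the velocities to transform in the same way as the positions.

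Next I would substitute the $z$-dynamics. Starting from \eqref{sysx}, namely $\dot{z}=-\bigtriangledown\Phi_{z}=-P_z z$, left-multiplication by $\Phi$ yields
\begin{equation}
\dot{\xi}=\Phi\dot{z}=-\Phi P_z z=-\Phi P_z\Phi^{-1}\xi .
\end{equation}
Invoking $P_{\xi}=P_z$ from Lemma \ref{def4}, the right-hand side is precisely $-\Phi P_{\xi}\Phi^{-1}\xi=-\bigtriangledown\Phi_{\xi}$, which is \eqref{sysxi}. For the converse direction I would reverse the argument: starting from \eqref{sysxi} and left-multiplying by $\Phi^{-1}$ gives $\dot{z}=\Phi^{-1}\dot{\xi}=-P_{\xi}\Phi^{-1}\xi=-P_z z=-\bigtriangledown\Phi_{z}$, recovering \eqref{sysx}. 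Hence the two flows are conjugate under $\Phi$, which is the claimed equivalence.

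The one genuinely substantive point, and the step I expect to require the most care, is the justification of the identity $\bigtriangledown\Phi_{\xi}=\Phi\,\bigtriangledown\Phi_{z}$ underpinning $P_{\xi}=P_z$. A gradient does not in general transform as a contravariant vector under a coordinate change, so one cannot simply assert this; it holds here because the transformed force in \eqref{augpxi} is defined through the similarity $\Phi P_{\xi}\Phi^{-1}$, and Theorem \ref{th2} together with Lemma \ref{th1} guarantees that the underlying distance-based potential and its gradient are preserved entrywise under $\Phi$, so that the matrix of pairwise interaction coefficients is literally unchanged, $P_{\xi}=P_z$. Once this identity is secured, everything else is the direct substitution above, and the equivalence of \eqref{sysx} and \eqref{sysxi} follows immediately.
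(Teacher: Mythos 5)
Your proof is correct and follows essentially the same route as the paper: left-multiply the $z$-dynamics by $\Phi$, substitute $z=\Phi^{-1}\xi$, and invoke $P_{\xi}=P_z$ to recognize the right-hand side as $-\bigtriangledown\Phi_{\xi}$. The only difference is minor but favorable to you: you make the notion of equivalence precise (conjugacy of solutions) and prove both directions, whereas the paper argues only the forward substitution and then appeals to the equality of the null spaces of $P_z$ and $\Phi P_z\Phi^{-1}$ to identify equilibrium sets.
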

\begin{proof}
\label{prf5}
By multiplying \eqref{sysx} with $\Phi$ and using \eqref{augpx} , we get
\begin{equation}
\label{step1}
\dot{\xi}=-\Phi P_z \Phi^{-1} \xi
\end{equation}
But $P_z=P_{\xi}$ by \textit{Theorem} \ref{th4}. Also the null space of $P_z$ is the same as the null space of $\Phi P_z \Phi^{-1}$. Hence, \eqref{sysx} and \eqref{sysxi} are equivalent.
\end{proof}
\begin{remark}
Note that every equilibrium set of system \eqref{sysxi} is equal to every equilibrium set of system \eqref{sysx}, as per \textit{Theorem} \ref{th5}.
\end{remark}
\begin{theorem}
\label{th6}
Suppose that there is a system of equations in $x$ defined as
\begin{equation}
\label{sys2x}
\dot{z}=-K_zz
\end{equation}
and another system of equations in $\xi$ defined as
\begin{equation}
\label{sys2xi}
\dot{\xi}=-K_{\xi} \xi
\end{equation}
where $K_{\xi}=\Phi K_z \Phi^{-1}$. Then \eqref{sys2x} and \eqref{sys2xi} are equivalent under the transformation $\Phi:z\rightarrow \xi$.
\end{theorem}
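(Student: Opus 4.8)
The plan is to treat this as a routine change of variables, exactly parallel to the proof of Theorem~\ref{th5}, but now carrying the controller gain $K_z$ through the similarity transformation rather than the potential matrix $P_z$. The single fact I rely on is that $\Phi:z\rightarrow\xi$ is the linear map $\xi=\Phi z$ with an existing inverse $\Phi^{-1}:\xi\rightarrow z$, so that the two state descriptions are in one-to-one correspondence along every trajectory.

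First I would differentiate the defining relation $\xi=\Phi z$ in time; since $\Phi$ is a constant real (weighted) transformation this gives $\dot{\xi}=\Phi\dot{z}$. Substituting the given dynamics $\dot{z}=-K_z z$ yields $\dot{\xi}=-\Phi K_z z$, and then replacing $z=\Phi^{-1}\xi$ produces
\begin{equation}
\dot{\xi}=-\Phi K_z \Phi^{-1}\xi=-K_{\xi}\xi,
\end{equation}
which is precisely \eqref{sys2xi} with the stated $K_{\xi}=\Phi K_z\Phi^{-1}$. Thus every solution $z(t)$ of \eqref{sys2x} maps, under $\xi(t)=\Phi z(t)$, to a solution of \eqref{sys2xi}, and conversely $z(t)=\Phi^{-1}\xi(t)$ recovers a solution of \eqref{sys2x}; the invertibility of $\Phi$ makes this correspondence a bijection between the two solution sets.

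To finish, I would argue that this correspondence preserves the qualitative behaviour that the notion of \emph{equivalence} is meant to capture. Because $K_{\xi}$ and $K_z$ are related by the similarity transformation $\Phi(\cdot)\Phi^{-1}$, they share the same spectrum, so asymptotic stability of one system is inherited by the other; moreover, as in the remark following Theorem~\ref{th5}, the equilibrium set of \eqref{sys2xi} is the image under $\Phi$ of the equilibrium set of \eqref{sys2x} (equivalently, $\Phi$ carries the null space of $K_z$ onto the null space of $K_{\xi}=\Phi K_z\Phi^{-1}$). The only point needing care, and the mildest obstacle here, is simply pinning down the meaning of ``equivalent'' so that it refers to this trajectory-by-trajectory and equilibrium-set correspondence under the invertible map $\Phi$; once that is fixed, the computation above is immediate and no further estimates are required.
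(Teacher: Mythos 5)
Your proof is correct, but it takes a more complete route than the paper's own one-line argument, and in doing so it quietly repairs an imprecision. The paper proves this theorem purely at the level of equilibria: it asserts that the null space of $K_z$ is \emph{the same} as the null space of $K_{\xi}$, concludes that the two systems have the same equilibrium set, and declares them equivalent. Strictly speaking that identity is false in general: for $K_{\xi}=\Phi K_z \Phi^{-1}$ one has $\mathrm{Null}(K_{\xi})=\Phi\,\mathrm{Null}(K_z)$, which equals $\mathrm{Null}(K_z)$ only when that subspace happens to be invariant under $\Phi$. Your formulation --- that $\Phi$ carries the null space of $K_z$ \emph{onto} the null space of $K_{\xi}$, so the equilibrium sets correspond under the transformation rather than literally coincide --- is the correct statement. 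Beyond that, your change-of-variables computation ($\dot{\xi}=\Phi\dot{z}=-\Phi K_z\Phi^{-1}\xi$) establishes a bijection between the full solution sets of \eqref{sys2x} and \eqref{sys2xi}, not just between their equilibria, and the similarity-invariance of the spectrum gives preservation of asymptotic stability; this mirrors the style of the paper's proof of Theorem~\ref{th5}, where the multiplication by $\Phi$ is carried out explicitly, more than its proof of the present theorem. The one residual issue, which you flag yourself, is one the paper shares: ``equivalent'' is never formally defined anywhere in the text, so both arguments ultimately rest on an informal reading of that word; your trajectory-by-trajectory correspondence is the stronger and safer reading, and it subsumes the paper's equilibrium-set criterion as a special case.
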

\begin{proof}
\label{prf6}
As the null space of $K_z$ is the same as the null space of $K_{\xi}$, the two systems \eqref{sys2x} and \eqref{sys2xi} have the same equilibrium set and therefore are equivalent.
\end{proof}
\begin{theorem}
\label{th6}
Suppose that there is a system of equations in $z$ defined as
\begin{equation}
\label{sys3x}
\dot{z}_e=-K_zz_e-P_zz
\end{equation}
and another system of equations in $\xi$ defined as
\begin{equation}
\label{sys3xi}
\dot{\xi}_e=-\Phi K_z\Phi^{-1}\xi_e-\Phi P_z \Phi^{-1} \xi
\end{equation}
Then \eqref{sys3x} and \eqref{sys3xi} are equivalent under the transformation $\Phi:z\rightarrow \xi$.
\end{theorem}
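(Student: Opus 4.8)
The plan is to exploit the fact that $\Phi$ is a constant, invertible linear map, so that the two coordinate systems are related by the exact substitution $\xi=\Phi z$, and then to reduce the combined statement to the two equivalences already proved term-by-term. First I would record that, because $\xi_d=\Phi z_d$, the error coordinates transform in exactly the same way as the states, namely $\xi_e=\Phi z_e$, and that differentiating the time-invariant relation gives $\dot{\xi}_e=\Phi\dot{z}_e$. This is the only structural property of $\Phi$ that is needed, and it is available since $\Phi^{-1}$ is assumed to exist.

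Next I would left-multiply the $z$-system \eqref{sys3x} by $\Phi$, obtaining $\Phi\dot{z}_e=-\Phi K_z z_e-\Phi P_z z$. Substituting $\dot{\xi}_e=\Phi\dot{z}_e$, $z_e=\Phi^{-1}\xi_e$, and $z=\Phi^{-1}\xi$, and inserting $\Phi^{-1}\Phi=I$ between each coefficient matrix and the states, turns this identity into $\dot{\xi}_e=-\Phi K_z\Phi^{-1}\xi_e-\Phi P_z\Phi^{-1}\xi$, which is precisely \eqref{sys3xi}. Thus $\Phi$ carries every trajectory of \eqref{sys3x} bijectively onto a trajectory of \eqref{sys3xi}, and $\Phi^{-1}$ carries them back; in particular the equilibrium sets correspond one-to-one, which is the notion of equivalence used throughout this section.

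To make the correspondence of the coefficient data explicit, I would then invoke the earlier results separately on each term. For the collision-avoidance term, Lemma \ref{def4} together with \eqref{augpx} and \eqref{augpxi} gives $P_\xi=P_z$, so that $\Phi P_z\Phi^{-1}$ is exactly the transformed potential matrix $\bigtriangledown\Phi_\xi$; for the formation term, the preceding theorem identifies $\Phi K_z\Phi^{-1}$ with $K_\xi$ and shows that its null space coincides with that of $K_z$. Adding the two established equivalences shows that the coupled right-hand sides agree under the transformation, so the two systems share the same equilibrium set.

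The step I expect to be the main obstacle is handling the state dependence of $P_z$: because the gradient matrix $P_z$ is assembled from the distance-based potentials, its entries are functions of the current configuration rather than constants, so the manipulation $\Phi P_z z=\Phi P_z\Phi^{-1}\xi$ must be read pointwise, and the claimed identity $P_\xi(\xi)=P_z(z)$ must hold at each matched pair $\xi=\Phi z$. This is exactly what the equivalence of the potentials and of their gradients guarantees, via Lemma \ref{th1} and Theorem \ref{th2}, so the pointwise algebra is legitimate. The only remaining care is to confirm that the equilibrium condition $K_z z_e+P_z z=0$ and its transform define the same set under $\Phi$, which again follows from the invertibility of $\Phi$.
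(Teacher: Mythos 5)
Your proposal is correct, and it is in fact more complete than the argument the paper itself gives for this theorem. The paper's proof is a single assertion: the null space of $-K_zz_e-P_zz$ equals the null space of $-\Phi K_z\Phi^{-1}\xi_e-\Phi P_z\Phi^{-1}\xi$, hence the equilibrium sets coincide, hence the systems are equivalent; no computation is offered to justify that equality. You instead establish the equivalence at the level of trajectories: using $\xi_e=\Phi z_e$, $\dot{\xi}_e=\Phi\dot{z}_e$, $z=\Phi^{-1}\xi$, you left-multiply \eqref{sys3x} by $\Phi$ and obtain \eqref{sys3xi} identically, so $\Phi$ carries solutions bijectively onto solutions and in particular zero sets onto zero sets --- which is precisely the justification the paper's one-line null-space claim is missing (interestingly, this substitution argument is the same device the paper does use in its proof of Theorem~\ref{th5}, just not here). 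Your explicit attention to the state dependence of $P_z$ is also a genuine improvement: since the entries $p_{z_{ij}}$ depend on $\|z_i-z_j\|$, the identity $\Phi P_z z=\Phi P_z\Phi^{-1}\xi$ only makes sense pointwise along matched configurations $\xi=\Phi z$, and invoking Lemma~\ref{th1} and Theorem~\ref{th2} to license this is exactly the right move; the paper glosses over this entirely. In short, your route proves a stronger statement (trajectory equivalence, not merely coincidence of equilibria) and supplies the omitted reasoning, at the cost of a longer argument that leans on the earlier term-by-term results.
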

\begin{proof}
\label{prf6}
As the null space of $-K_zz_e-P_zz$ is the same as the null space of $-\Phi K_z\Phi^{-1}\xi_e-\Phi P_z \Phi^{-1} \xi$, the two systems \eqref{sys3x} and \eqref{sys3xi} have the same equilibrium set and therefore are equivalent.
\end{proof}
\begin{theorem}
\label{th8}
$\lim_{t\rightarrow \infty} \xi_e=0$ for the system 
\begin{equation}
\label{sys}
\dot{\xi}=u
\end{equation}
with the control law
\begin{equation}
\label{cntrl}
u=-K_{\xi}\xi_e+\dot{\xi}_d-\bigtriangledown\Phi_{\xi}
\end{equation}
\end{theorem}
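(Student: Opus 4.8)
The plan is to collapse the closed-loop behaviour of \eqref{sys}--\eqref{cntrl} onto the error dynamics already analysed in \eqref{clsdxiD}, and then to invoke Theorem \ref{thprob} together with the stabilization result. First I would substitute the control law \eqref{cntrl} into \eqref{sys} and, using $\xi_e=\xi-\xi_d$ so that $\dot{\xi}_e=\dot{\xi}-\dot{\xi}_d$, cancel the feedforward term $\dot{\xi}_d$ to obtain the closed-loop error dynamics
\begin{equation}
\dot{\xi}_e=-K_{\xi}\xi_e-\bigtriangledown\Phi_{\xi}.
\end{equation}
With the design choice $K_z=\Phi^{-1}D$ (so $K_{\xi}=D\Phi^{-1}$), this is exactly \eqref{clsdxiD}, the gradient term $\bigtriangledown\Phi_{\xi}$ playing the role of $\bigtriangledown f_{\xi}$ in \eqref{clsdxi}. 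Hence the convergence question reduces to the one settled in Theorem \ref{thprob}: $\lim_{t\rightarrow\infty}\xi_e=0$ precisely when both $D\Phi^{-1}\xi_e=0$ and $\bigtriangledown\Phi_{\xi}=0$ in the limit.

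Next I would dispose of the two driving terms in the manner suggested by the remark after Theorem \ref{thprob}. For the linear part, Theorem \ref{thFishFull} guarantees a diagonal stabilizing matrix $D$ placing every eigenvalue of $D\Phi^{-1}$ in the open right half-plane, so that $-D\Phi^{-1}$ is Hurwitz and the unforced linear flow alone drives $\xi_e$ to the origin. For the full nonlinear loop I would introduce an energy function combining the quadratic error energy with the total collision-avoidance potential, for instance $V=\tfrac12\,\xi_e^{*}P\xi_e+\Phi_{\xi}$, where $P$ is the Hermitian positive-definite solution of the Lyapunov equation associated with the Hurwitz matrix $-D\Phi^{-1}$. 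Since the collision-avoidance potential is bounded below and grows without bound as agents approach one another, $V$ is a legitimate, proper energy function; differentiating $V$ along the closed loop and substituting $\bigtriangledown\Phi_{\xi}=\Phi P_z\Phi^{-1}\xi$ from \eqref{augpxi} should produce $\dot V\le 0$.

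The concluding step would be an appeal to LaSalle's invariance principle: on the largest invariant subset of $\{\dot V=0\}$ one has $\dot{\xi}_e=0$, which forces both $D\Phi^{-1}\xi_e=0$ and $\bigtriangledown\Phi_{\xi}=0$, and Theorem \ref{thprob} then yields $\xi_e\rightarrow 0$ as required.

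The main obstacle I anticipate lies in the cross term of $\dot V$. Because $D\Phi^{-1}$ is a product of a diagonal matrix with the non-symmetric, complex inverse transformation, it is not Hermitian, so the quadratic term does not simplify cleanly and the indefinite coupling between $\xi_e^{*}P\,\bigtriangledown\Phi_{\xi}$ and the genuine gradient-energy contribution $\dot\Phi_\xi$ must be dominated by the negative-definite term coming from the Lyapunov equation. Establishing this domination, while keeping the bookkeeping consistent in the complex domain where $(\cdot)^{*}$ denotes the conjugate transpose, is where the real work resides; the remainder is substitution and invocation of the already-established theorems.
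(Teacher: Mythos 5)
Your opening step matches the paper's: substituting \eqref{cntrl} into \eqref{sys} and cancelling $\dot{\xi}_d$ gives the error dynamics \eqref{clsdxi} (equivalently \eqref{clsdxiD} when $K_{\xi}=D\Phi^{-1}$). After that the arguments diverge, and yours has a genuine gap: the entire proof hinges on $\dot{V}\le 0$ for $V=\tfrac12\,\xi_e^{*}P\xi_e+\Phi_{\xi}$, which you never establish and explicitly defer (``where the real work resides''). The obstruction is worse than the non-Hermitian cross term you flag. First, the quantity $\bigtriangledown\Phi_{\xi}=\Phi P_z\Phi^{-1}\xi$ appearing in the dynamics is a similarity transform of the Cartesian vector field, \emph{not} the gradient of any scalar $\Phi_{\xi}$ with respect to $\xi$ (that would be $\Phi^{-*}\bigtriangledown\Phi_z$); unless $\Phi$ is unitary, the potential term in your $V$ does not pair with the dynamics the way a gradient-flow energy argument requires, so the claimed cancellation structure is absent from the start. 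Second, the potential depends on the absolute configuration $\xi$, not the error, so its derivative contains the injection $\mathrm{Re}\bigl[(\bigtriangledown\Phi_{\xi})^{*}\dot{\xi}_d\bigr]$, which is sign-indefinite and persistent whenever the desired motion is not a pure rigid translation (the paper explicitly allows time-varying scaling $a=B\sin(\omega t)$). Third, LaSalle's invariance principle as you invoke it applies to autonomous systems; with time-varying $\xi_d(t)$ the closed loop is non-autonomous, so you would need Barbalat's lemma or a LaSalle--Yoshizawa argument with uniformity hypotheses you do not verify.

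The paper's own proof is far more modest and uses no energy function at all: by the construction of the potential in \eqref{gradc}, $p_{z_{ij}}=0$ whenever $\parallel z_i-z_j\parallel\geq R$, and the paper asserts $p_{z_{ij}}\rightarrow 0$ as $t\rightarrow\infty$, i.e., the collision term switches off once the agents spread beyond the detection radius; the loop then reduces to the linear system $\dot{\xi}_e=-K_{\xi}\xi_e$, which converges because $D$ (via Theorem \ref{thFishFull} and Algorithm 1) places the eigenvalues of $K_{\xi}=D\Phi^{-1}$ in the open right half plane. That argument has its own leap---it assumes rather than proves that the gradient term vanishes---but it is the route the paper takes. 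To salvage your version, you would need to either restrict to a stationary or purely translating formation (where translation invariance of the potential kills the $\dot{\xi}_d$ injection and an invariance argument becomes available) and repair the gradient-pairing issue, or else supply the missing domination estimate; as written, your proposal is an outline whose decisive step is absent.
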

\begin{proof}
\label{prf8}
The control law of \eqref{cntrl} will give the closed loop dynamics of \eqref{clsdxi}. As $p_{z_{ij}}\rightarrow 0$ as $t\rightarrow \infty$, the closed loop dynamics of \eqref{clsdxi} will reduce to
\begin{equation}
\label{clsdwp}
\dot{\xi}_e=-K_{\xi}\xi_e
\end{equation}
as $P_{\xi}\rightarrow 0$ or $\bigtriangledown\Phi_{\xi}\rightarrow 0$. Then if the eigenvalues of the matrix $K_{\xi}$ are in the right half of s-plane, then $\lim_{t\rightarrow \infty} \xi_e=0$.
\end{proof}
Suppose that the Jacobi vectors of [] for $n=3$ agents are given as
\begin{equation}
\label{jacgen}
\begin{split}
\xi_1&=\sqrt{\mu_1}(z_2-z_1) \\
\xi_2&=\sqrt{\mu_2}\bigg(z_3-\frac{m_1z_1+m_2z_2}{m_1+m_2}\bigg ) \\
\xi_c&=\frac{\sum_{i=1}^{3}m_ix_i}{M}
\end{split}
\end{equation}
where $M=\sum_{i=1}^{n}m_i$ and
\begin{equation}
\label{mu}
\frac{1}{\mu_i}=\frac{1}{\sum_{k=1}^i}+\frac{1}{m_{i+1}} \ for \ i=1,2
\end{equation}
The Jacobi vectors can be rewritten as
\begin{equation}
\label{jacnorm}
\begin{split}
\xi_1&=a_{11}z_1+a_{12}z_2 \\
\xi_2&=a_{21}z_1+a_{22}z_2+a_{23}z_3 \\
\xi_c&=a_{31}z_1+a_{32}z_2+a_{33}z_3
\end{split}
\end{equation}
where $a_{11}=-\mu_1$, $a_{12}=\mu_1$, $a_{21}=-\frac{m_1\sqrt{\mu_2}}{m_1+m_2}$, $a_{21}=-\frac{m_2\sqrt{\mu_2}}{m_1+m_2}$, $a_{23}=\sqrt{\mu_2}$, $a_{31}=\frac{m_1}{M}$, $a_{32}=\frac{m_2}{M}$ and $a_{33}=\frac{m_3}{M}$. Then equation \eqref{jacnorm} can be written in augmented form as
\begin{equation}
\label{jacmat}
\begin{bmatrix}
\xi_1 \\ \xi_2 \\ \xi_c
\end{bmatrix}=\begin{bmatrix}
a_{11} & a_{12} & 0 \\
a_{21} & a_{22} & a_{23} \\
a_{31} & a_{32} & a_{33}
\end{bmatrix}\begin{bmatrix}
z_1 \\ z_2 \\ z_3
\end{bmatrix}
\end{equation}
Hence, the Jacobi Transformation for $n=3$ is 
\begin{equation}
\label{jactrans}
\Phi=\begin{bmatrix}
a_{11} & a_{12} & 0 \\
a_{21} & a_{22} & a_{23} \\
a_{31} & a_{32} & a_{33}
\end{bmatrix}
\end{equation}
The vectors in Cartesian Coordinate can be expressed as
\begin{equation}
\label{cvec}
\begin{split}
z_1=A_{11}\xi_1+A_{12}\xi_2+A_{13}\xi_c \\
z_2=A_{21}\xi_1+A_{22}\xi_2+A_{23}\xi_c \\
z_3=A_{31}\xi_1+A_{32}\xi_2+A_{33}\xi_c
\end{split}
\end{equation}
where
\begin{equation}
\label{Aval}
\begin{split}
A_{11}&=\left[ \frac{1}{a_{11}}-\frac{a_{12}}{A_{x_2}}(\frac{a_{23}a_{31}}{a_{11}}-\frac{a_{33}a_{21}}{a_{11}})\right] \\
A_{12}&=\frac{a_{12}a_{33}}{A_{x_2}}; \ A_{13}=\frac{a_{12}a_{23}}{A_{x_2}} \\
A_{21}&=\frac{1}{A_{x_2}}(\frac{a_{23}a_{31}}{a_{11}}-\frac{a_{33}a_{21}}{a_{11}}) \\
A_{22}&=\frac{a_{33}}{A_{x_2}}; \ A_{23}=-\frac{a_{23}}{A_{x_2}} \\
A_{31}&=\frac{1}{A_{x_3}}\left[ \frac{a_{31}}{a_{11}}(a_{22}-\frac{a_{21}a_{12}}{a_{11}})-\frac{a_{21}}{a_{11}}(a_{32}-\frac{a_{31}a_{12}}{a_{11}})\right] \\
A_{32}&=\frac{1}{A_{x_3}}\left(a_{32}-\frac{a_{31}a_{12}}{a_{11}}\right); \\
A_{33}&=-\frac{1}{A_{x_3}}\left(a_{22}-\frac{a_{21}a_{12}}{a_{11}}\right) \\
A_{x_2}&=a_{33}\left(a_{22}-\frac{a_{21}a_{12}}{a_{11}}\right)-a_{23}\left( a_{32}-\frac{a_{31}a_{12}}{a_{11}}\right) \\
A_{x_3}&=a_{23}\left(a_{32}-\frac{a_{31}a_{12}}{a_{11}}\right)-a_{33}\left( a_{22}-\frac{a_{21}a_{12}}{a_{11}}\right)
\end{split}
\end{equation}
Then equation \eqref{cvec} can be written is augmented form as
\begin{equation}
\label{invjacmat}
\begin{bmatrix}
z_1 \\ z_2 \\ z_3
\end{bmatrix}=\begin{bmatrix}
A_{11} & A_{12} & A_{13} \\
A_{21} & A_{22} & A_{23} \\
A_{31} & A_{32} & A_{33}
\end{bmatrix}\begin{bmatrix}
\xi_1 \\ \xi_2 \\ \xi_c
\end{bmatrix}
\end{equation}
Hence the inverse of Jacobi Transformation is
\begin{equation}
\label{jactrans}
\Phi^{-1}=\begin{bmatrix}
A_{11} & A_{12} & A_{13} \\
A_{21} & A_{22} & A_{23} \\
A_{31} & A_{32} & A_{33}
\end{bmatrix}
\end{equation}
Let us choose a potential function of [] as
\begin{equation}
\label{potcart}
v_{ij}(z_i,z_j)=\left(min\left\{0,\frac{\parallel z_i-z_j\parallel^2-R^2}{\parallel z_i-z_j\parallel^2-R^2}\right\}\right)^2
\end{equation}
where $R>r>0$. $z_i,z_j$ denotes the position of robots and $v_{ij}(z_i,z_j)$ denotes the potential function of $z_i$ with respect to the neighbouring robot $z_j$. $R$ denotes the radius of detection and $r$ is the avoidance radius. Denote $z_{ij}=z_i-z_j$. Then the gradient of the potential function with respect to $z_{ij}$ is
\begin{equation}
\label{gradc}
\frac{\partial v_{ij}}{z_{ij}}= 
\begin{cases}
0, & \text{if $\parallel z_i-z_j\parallel\geq R$}\\
p_{z_{ij}}\left(z_i-z_j\right), & \text{if $R\geq\parallel x_i-x_j\parallel\geq r$}\\
\text{not defined}, & \text{if $\parallel z_i-z_j\parallel= r$}\\
0, & \text{if $\parallel z_i-z_j\parallel< r$}\\
\end{cases}
\end{equation}
where $p_{z_{ij}}=4\frac{\left(R^2-r^2\right)\left(\parallel z_i-z_j\parallel^2-R^2\right)}{\left(\parallel z_i-z_j\parallel^2-r^2\right)^3}$. Denote $x_i=f_i(z_1,z_2,z_c)$ and $x_j=f_j(z_1,z_2,z_c)$. For brevity, we write $f_i$ and $f_j$ to denote $f_i(z_1,z_2,z_c)$ and $f_j(z_1,z_2,z_c)$ respectively. Define $f_{ij}=f_i-f_j$. Then $p_{f_{ij}}=4\frac{\left(R^2-r^2\right)\left(\parallel f_i-f_j\parallel^2-R^2\right)}{\left(\parallel f_i-f_j\parallel^2-r^2\right)^3}$. From the collision avoidance law for $i^{th}$ agent
\begin{equation}
\label{totpotc}
u_{ai}=-\sum_{j=1,j\neq i}^{n}\frac{\partial v_{ij}}{x_{ij}}
\end{equation}
we have the following control laws for $n=3$ agents
\begin{equation}
\label{cntrlcart}
\begin{split}
u_{a1}&=(p_{f_{12}}+p_{f_{13}})z_1-p_{f_{12}}z_2-p_{f_{13}}z_3 \\
u_{a2}&=-p_{f_{21}}z_1+(p_{f_{23}}+p_{f_{21}})z_2-p_{f_{23}}z_3 \\
u_{a3}&=-p_{f_{31}}z_1-p_{f_{32}}z_2+(p_{f_{31}}+p_{f_{32}})z_3
\end{split}
\end{equation}
The equation \eqref{cntrlcart} can be written in augmented form as
\begin{equation}
\label{augcntrlcart}
\begin{bmatrix}
u_{a1} \\ u_{a2} \\ u_{a3}
\end{bmatrix}=\begin{bmatrix}
(p_{f_{12}}+p_{f_{13}}) & -p_{f_{12}} & -p_{f_{13}} \\
-p_{f_{21}} & (p_{f_{23}}+p_{f_{21}}) & -p_{f_{23}} \\
-p_{f_{31}} & -p_{f_{32}} & (p_{f_{31}}+p_{f_{32}})
\end{bmatrix}\begin{bmatrix}
z_1 \\ z_2 \\ z_3
\end{bmatrix}
\end{equation}
Then we write
\begin{equation}
\label{acntrlcart}
u_a=p_fz
\end{equation}
where
\begin{equation}
\label{potmatcart}
p_f=\begin{bmatrix}
(p_{f_{12}}+p_{f_{13}}) & -p_{f_{12}} & -p_{f_{13}} \\
-p_{f_{21}} & (p_{f_{23}}+p_{f_{21}}) & -p_{f_{23}} \\
-p_{f_{31}} & -p_{f_{32}} & (p_{f_{31}}+p_{f_{32}})
\end{bmatrix}
\end{equation}
We write
\begin{equation}
\label{cntrlc}
u_a=p_f\Phi^{-1}\xi
\end{equation}
Note that $u_a=f(\xi)$.
Next we define the potential function in transformed domain as
\begin{equation}
\label{potinjac}
u_{\xi a}=\Phi p_f\Phi^{-1}\xi=p_{\xi}\xi
\end{equation}
where $p_{\xi}=\Phi p_f\Phi^{-1}$. Note that $\xi_i=f_i(z_1,...,z_n)$. Therefore the summation of the gradient of the potential functions with $\xi_i$ must associate the linear combination of the potentials due to $z_1,...,z_n\in z$ as in \eqref{potinjac}, obviously, represented in terms of $\xi_1,...,\xi_n\in \xi$.

\subsection{Double integrator dynamics}
Suppose that each agent is driven by a double-integrator dynamics
\begin{equation}
\label{did}
\ddot{z}_i=v_{fi}
\end{equation}
where the position $z_i\in\mathbb{C}$ and the velocity $\dot{z}_i\in\mathbb{C}$ are the states, and the acceleration $v_{fi}\in\mathbb{C}$ is the control input. Then the combined dynamics of \eqref{did} can be written as
\begin{equation}
\label{comdid}
\ddot{z}=v_f
\end{equation}
where $z=[z_1,...,z_n]^T\in\mathbb{C}^n$ and $v_f=[v_{f1},...,v_{fn}]^T\in\mathbb{C}^n$. Consider that system \eqref{comdid} is driven by the following control law
\begin{equation}
\label{didcntrl}
v_f=-K_pz_e-K_v\dot{z}_e+\ddot{z}_d
\end{equation}
where $K_p,K_v\in \mathbb{C}^{n\times n}$ are the controller gains and $z_{e}=z-z_{d}\in \mathbb{C}^{n}$, with $z_{d}=[z_{d1},...,z_{d2}]^T\in \mathbb{C}^{n}$, and $z_{di}=z_{dxi}+\iota x_{dyi}\in\mathbb{C}$ is the desired trajectory given to the $i^{th}$ agent to follow. Then the overall closed loop system of double integrators can be written in the following compact form
\begin{equation}
\label{didclsd}
\ddot{z}_e=-K_pz_e-K_v\dot{z}_e
\end{equation}
Applying transformation \eqref{comT} in \eqref{didcntrl} we get
\begin{equation}
\label{didcntrlT}
\ddot{\xi}_e=-K_{p\xi}\xi_e-K_{v\xi}\dot{\xi}_e
\end{equation}
where $K_{p\xi}=\Phi K_p\Phi^{-1}$ and $K_{v\xi}=\Phi K_v\Phi^{-1}$. Let $K_p=\Phi^{-1}D_1$ and $K_v=\Phi^{-1}D_2$. Then the combined system of \eqref{didcntrlT} can be written in matrix form as
\begin{equation}
\label{didMat}
\begin{bmatrix}
\dot{\xi} \\ \ddot{\xi}
\end{bmatrix}=\begin{bmatrix}
\mathbf{0} & \mathbf{I} \\
-D_1\Phi^{-1} & -D_2\Phi^{-1}
\end{bmatrix}\begin{bmatrix}
\xi \\ \dot{\xi}
\end{bmatrix}
\end{equation}
\begin{theorem}
\label{didTh}
Let there be a complex transformation $\Phi:z\rightarrow \xi$ and its inverse exists. Also the leading principal minors of $\Phi^{-1}$ are nonzero. Then there exist two diagonal stabilizing matrices $D_1$ and $D_2$ for the system \eqref{didMat}.

%Let the following matrix
%\begin{equation}
%\label{didSM}
%A=\begin{bmatrix}
%\mathbf{0} & \mathbf{I} \\
%-D_1\Phi^{-1} & -D_2\Phi^{-1}
%\end{bmatrix}
%\end{equation}
%and assume that
%\begin{equation}
%\label{assump}
%\sigma_{2i}=\gamma\sigma_{1i}
%\end{equation} 
%where $\sigma_{1i}$ is an complex eigenvalue of $D_1\Phi^{-1}$ and $\sigma_{2i}$ is an corresponding complex eigenvalue of $D_2\Phi^{-1}$ with $\gamma\in\mathbb{R}$ being a scalar, then $A$ is stable if the following condition holds
%\begin{equation}
%\label{didcond}
%\frac{Re(\sigma_{1i})^3}{Im(\sigma_{1i})^2[1-Re(\sigma_{1i})]}>\frac{1}{\gamma^2}
%\end{equation}
\end{theorem}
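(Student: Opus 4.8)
The plan is to reduce the double-integrator stabilization problem to the single-integrator Fisher--Fuller result (Theorem \ref{thFishFull}) and then exploit the block-companion structure of the system matrix in \eqref{didMat}. First I would apply Theorem \ref{thFishFull} to $\Phi^{-1}$: since its leading principal minors are nonzero (and can be taken positive after a diagonal sign normalization, exactly as in Algorithm~1), there is a positive diagonal matrix $D$ such that $B:=D\Phi^{-1}$ has eigenvalues $\mu_1,\dots,\mu_n$ that are all positive and simple. The simplicity is the key by-product: it makes $B$ diagonalizable, $B=V\Lambda V^{-1}$ with $\Lambda=\mathrm{diag}(\mu_1,\dots,\mu_n)$.

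Next I would set $D_1=c_1 D$ and $D_2=c_2 D$ for arbitrary positive scalars $c_1,c_2$ (the choice $D_1=D_2=D$ already suffices). Then $D_1\Phi^{-1}=c_1B$ and $D_2\Phi^{-1}=c_2B$ are diagonalized by the \emph{same} $V$, hence commute, and the $2n\times 2n$ system matrix of \eqref{didMat} is a block-companion matrix whose two blocks share an eigenbasis. Expanding its characteristic polynomial by a Schur-complement reduction of $\det(\lambda I-M)$ gives
\[
\det\!\bigl(\lambda^2 I+\lambda\,c_2B+c_1B\bigr)=\prod_{i=1}^{n}\bigl(\lambda^2+c_2\mu_i\lambda+c_1\mu_i\bigr),
\]
where the product factorization is legitimate precisely because $B$ is diagonalizable.

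Finally, since each $\mu_i>0$ and $c_1,c_2>0$, every quadratic factor $\lambda^2+c_2\mu_i\lambda+c_1\mu_i$ has strictly positive coefficients, so by the Routh--Hurwitz test for second-order polynomials both of its roots lie in the open left half-plane. Consequently all $2n$ eigenvalues of the system matrix in \eqref{didMat} have negative real parts, so $(\xi,\dot\xi)\to 0$ and $(D_1,D_2)$ is the required pair of diagonal stabilizing matrices.

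The step I expect to be the main obstacle is the decoupling: the clean quadratic-by-quadratic factorization only holds when $D_1\Phi^{-1}$ and $D_2\Phi^{-1}$ are simultaneously diagonalizable, which is why both must be built from the \emph{single} Fisher--Fuller matrix $D$; choosing $D_1$ and $D_2$ independently would in general destroy commutativity and invalidate the Routh--Hurwitz argument. A secondary technical point is that Theorem \ref{thFishFull} is stated for real matrices with positive leading principal minors, so to apply it to the complex $\Phi^{-1}$ one should either restrict to the real Jacobi transformation of \eqref{jactrans} or pass to the $2n\times 2n$ real block representation (each complex entry replaced by its $2\times 2$ real form) before invoking it; in that real setting the spectrum is genuinely positive and real, which is what the positivity of the quadratic coefficients above relies on.
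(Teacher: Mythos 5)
Your proof is sound and reaches the conclusion by a genuinely different route than the paper, although the two share a structural core: both force $D_1\Phi^{-1}$ and $D_2\Phi^{-1}$ to be simultaneously diagonalizable by taking $D_2$ proportional to $D_1$ (the paper's assumption \eqref{assump}, your explicit choice $D_1=c_1D$, $D_2=c_2D$), and both collapse the spectrum of the block matrix in \eqref{didMat} into $n$ scalar quadratics. The divergence is in how those quadratics are certified stable. The paper keeps the eigenvalues $\sigma_{1i}$ of $D_1\Phi^{-1}$ complex and invokes Theorem~2 of \cite{Xie:85} on second-order polynomials with complex coefficients, terminating in the inequality \eqref{ineqprf}; existence of $D_1,D_2$ is therefore \emph{conditional} on that inequality being satisfiable, and the paper never verifies that it can always be met. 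You instead pre-process with the Fisher--Fuller result (Theorem \ref{thFishFull}), already needed for the single-integrator case, to manufacture one diagonal $D$ for which $D\Phi^{-1}$ has positive, simple, hence real eigenvalues $\mu_i$; then each factor $\lambda^2+c_2\mu_i\lambda+c_1\mu_i$ has positive real coefficients and ordinary Routh--Hurwitz closes the argument unconditionally. This is cleaner, self-contained, and it makes explicit by construction the proportionality that the paper merely asserts via ``share the same set of eigenvectors.'' What the paper's route buys in exchange is direct engagement with genuinely complex $\Phi^{-1}$: your closing caveat marks the true boundary of your method, since Fisher--Fuller is a real-matrix theorem, and the fallback you sketch (the $2n\times 2n$ real block representation) has a gap worth flagging --- nonzero leading principal minors of the complex $\Phi^{-1}$ control only the even-order minors of the real representation, since $\det R_{2k}=\vert\det \Phi^{-1}_{k}\vert^{2}$, and say nothing about the odd-order ones, so the hypothesis of Theorem \ref{thFishFull} does not automatically transfer. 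For real transformations, e.g.\ the Jacobi transformation \eqref{jactrans} used in the paper's simulations, your argument is complete and strictly stronger than the paper's conditional one.
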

\begin{proof}
Denote 
\begin{equation}
\label{didSM}
A=\begin{bmatrix}
\mathbf{0} & \mathbf{I} \\
-D_1\Phi^{-1} & -D_2\Phi^{-1}
\end{bmatrix}
\end{equation}
Let $[x^T,y^T]^T$ be the right eigenvector of $A$. We then have
\begin{equation}
\label{vect}
A\begin{bmatrix}
x \\ y
\end{bmatrix}=\lambda\begin{bmatrix}
x \\ y
\end{bmatrix}
\end{equation}
where $\lambda_i$ is an eigenvalue of $A$ corresponding to the eigenvector $[x^T,y^T]^T$. \eqref{vect} implies the following equations
\begin{equation}
\label{eivec1}
y=\lambda_ix
\end{equation}
\begin{equation}
\label{eivec2}
-D_1\Phi^{-1}x-D_2\Phi^{-1}y=\lambda_iy
\end{equation}
Using \eqref{eivec1}, \eqref{eivec2} is written as
\begin{equation}
\label{intrm1}
-D_1\Phi^{-1}x-\lambda_iD_2\Phi^{-1}x=\lambda_i^2x
\end{equation}
The matrices $D_1\Phi^{-1}$ and $D_2\Phi^{-1}$ share same set of eigenvector $x$ and is therefore simultaneously diagonalizable. Let $D_1\Phi^{-1}x=\sigma_{1i}x$ and $D_2\Phi^{-1}x=\sigma_{2i}x$, where $\sigma_{1i}$ and $\sigma_{2i}$ are complex eigenvalues of the matrices $D_1\Phi^{-1}$ and $D_2\Phi^{-1}$. Since $D_1\Phi^{-1}$ and $D_2\Phi^{-1}$ are simultaneously diagonalizable the following is true
\begin{equation}
\label{assump}
\sigma_{2i}=\gamma\sigma_{1i}
\end{equation}
Replacing \eqref{assump} to get
\begin{equation}
\label{intrm2}
-\sigma_{1i}x-\gamma\sigma_{1i}\lambda_ix=\lambda_i^2x
\end{equation}
From \eqref{intrm2}, we have the following second order characteristic polynomial
\begin{equation}
\label{charpol}
\lambda_i^2+\gamma\sigma_{1i}\lambda_i+\sigma_{1i}=0
\end{equation}
Thus to show the existence of stabilizing matrices $D_1$ and $D_2$, it remains to show that $\sigma_{1i} (1,...,n)$ can be assigned such that the roots of the complex coefficient characteristic equation \eqref{intrm2} have negative real parts. \\
Define $a_1=a_2=Re(\sigma_{1i})$ and $b_1=b_2=Im(\sigma_{1i})$. Then from \cite{Xie:85} a second order polynomial with complex coefficients can be written as follows
\begin{equation}
\label{order2}
F_2(\lambda)=\lambda^2+(a_1+ib_1)\lambda+(a_2+ib_2)
\end{equation}
By $\mathit{Theorem 2}$ of \cite{Xie:85}, \eqref{order2} is stable if and only if the following polynomial of order 1 is stable
\begin{equation}
\label{order1}
F_1(\lambda)=a_1\lambda+a_1a_2^1+b_1^1b_2+ib_2
\end{equation}
where $b_1^1=a_1b_1-b_2$ and $a_2^1=a_1a_2$. Then the root of $F_1(\lambda)$ in \eqref{order1} is as follows
\begin{equation}
\label{rootofF1}
\lambda=-\frac{a_1^2a_2+(a_1b_1-b_2)b_2}{a_1}-\frac{b_1}{a_1}
\end{equation}
The real part of the root in \eqref{rootofF1} holds the following inequality for \eqref{order1} to be stable
\begin{equation}
\label{ineq2}
\frac{a_1^2a_2+a_1b_1b_2-b_2^2}{a_1}>0
\end{equation}
where $a_1>0$. Replacing $a_1,a_2,b_1,b_2$ we get the following condition for the existence of $D_1$ and $D_2$ and to prove the theorem

\begin{equation}
\label{ineqprf}
\frac{Re(\sigma_{1i})^3}{Im(\sigma_{1i})^2[1-Re(\sigma_{1i})]}>\frac{1}{\gamma^2}
\end{equation}

\end{proof}

\begin{remark}
Under the assumption of \eqref{assump} and the condition \eqref{ineqprf}, there exist $D_1$ and $D_2$ such that system \eqref{didMat} is stable. Therefore it is sufficient to find a $D_1$, as the eigenvalues of $D_2$ are just a scalar multiple of the eigenvalues of $D_1$.
\end{remark}

From the proof of \textit{Theorem 9}, we find that stabilizing matrices $D_1$ and $D_2$ can be found from for double-integrator dynamics including the condition \eqref{ineqprf} in the \textbf{Algorithm 1}. We present it below.\\

\textbf{Algorithm 2}\\
\rule{8.5cm}{.1mm}
\textbf{Input}: $\Phi^{-1}_{i+1}$ for $i=1,...,n-1$, with $\phi_{11}$.\\
\textbf{Output}: Stabilizing matrix $D_1$ and $D_2$.\\
\textbf{Procedure}:\\
Select $\lambda^{'}_1>0$\\
Find $d_1$ from $d_1=\frac{\lambda^{'}_1}{\phi_{11}}$\\
\textbf{for} $i=1,...,n-1$ \textbf{do}\\
Find $d_{i+1}$ using \eqref{chareq} such that all the eigenvalues of $M(d_{i+1})$ are in the open right half of complex plane\\
\textbf{end for}\\
Select a $D$ that satisfies the condition \eqref{ineqprf}\\
Construct $D=diag(d_1,...,d_n)$\\
Choose a scalar $\gamma$\\
Construct $D_1=D$ and $D_2=\gamma D$\\
\rule{8.5cm}{.1mm}\\

\begin{theorem}
Apply the transformation \eqref{ctrans} on \eqref{comdid} to get the following system
\begin{equation}
\label{comdideq}
\ddot{\xi}=v_{\xi f}
\end{equation}
Then the following control law 
\begin{equation}
\label{didcntrlxi}
v_{\xi f}=-D_1\Phi^{-1}\xi_e-D_2\Phi^{-1}\dot{\xi}_e+\ddot{\xi}_d
\end{equation}
will stabilize the system \eqref{comdideq}, if there exist two diagonal stabilizing matrices $D_1$ and $D_2$ for $\Phi^{-1}$, all of whose leading principal minors are nonzero.
\end{theorem}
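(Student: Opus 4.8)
The plan is to reduce this statement to the eigenvalue placement already secured in Theorem~\ref{didTh}, so that essentially no new analysis is required beyond a substitution and a state-space rewriting. First I would insert the proposed control law \eqref{didcntrlxi} directly into the transformed open-loop system \eqref{comdideq}. Writing $\xi_e=\xi-\xi_d$, and hence $\ddot{\xi}_e=\ddot{\xi}-\ddot{\xi}_d$, the feedforward term $\ddot{\xi}_d$ cancels and the closed-loop error dynamics collapse to $\ddot{\xi}_e=-D_1\Phi^{-1}\xi_e-D_2\Phi^{-1}\dot{\xi}_e$, which is exactly \eqref{didcntrlT} with the identifications $K_{p\xi}=D_1\Phi^{-1}$ and $K_{v\xi}=D_2\Phi^{-1}$.

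Next I would cast these second-order error dynamics into the first-order state-space form of \eqref{didMat}. Taking the augmented error state $[\xi_e^T,\dot{\xi}_e^T]^T$ yields $\frac{d}{dt}[\xi_e^T,\dot{\xi}_e^T]^T=A\,[\xi_e^T,\dot{\xi}_e^T]^T$, where $A$ is precisely the matrix of \eqref{didSM} (the system matrix is unchanged when one passes from the $\xi$-coordinates to the error coordinates). Stabilization of \eqref{comdideq} in the sense $\lim_{t\to\infty}\xi_e=0$ is therefore equivalent to showing that $A$ is Hurwitz, i.e., that every eigenvalue of $A$ lies in the open left half plane.

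At this point I would invoke Theorem~\ref{didTh}. Since $\Phi^{-1}$ has all leading principal minors nonzero, that theorem supplies diagonal matrices $D_1$ and $D_2$, coupled through the relation \eqref{assump} as $D_2=\gamma D_1$, for which each eigenvalue pair $\lambda_i$ of $A$ solves the second-order characteristic polynomial \eqref{charpol} and, under the inequality \eqref{ineqprf}, has strictly negative real part. Concretely, $D_1$ is produced by Algorithm 2 so that the eigenvalues $\sigma_{1i}$ of $D_1\Phi^{-1}$ are simultaneously placed in the right half plane (via Theorem~\ref{thFishFull}) and made to satisfy \eqref{ineqprf}, after which the single free scalar $\gamma$ fixes $D_2$. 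This renders $A$ Hurwitz, so the constant-coefficient homogeneous system for the augmented error state is asymptotically stable and both $\xi_e\to0$ and $\dot{\xi}_e\to0$ as $t\to\infty$, which is the asserted stabilization.

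I expect the only genuine obstacle to be the one already dispatched inside Theorem~\ref{didTh}: guaranteeing that a choice of $D_1$ stabilizing the \emph{first-order} product $D_1\Phi^{-1}$ can be made to also meet \eqref{ineqprf}, so that the \emph{induced} second-order eigenvalues $\lambda_i$ of \eqref{charpol} are stable. Once the substitution and the state-space rewriting above are in hand, the present statement is a direct corollary of that result, and the remaining steps are routine.
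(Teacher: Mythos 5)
Your proposal is correct and takes essentially the same route as the paper: substitute the control law \eqref{didcntrlxi} into \eqref{comdideq} to obtain the closed-loop error dynamics, identify the resulting system matrix with $A$ of \eqref{didSM}, and invoke Theorem~\ref{didTh} under condition \eqref{ineqprf} to conclude that $A$ is Hurwitz, hence $\xi_e\to 0$. The paper's own proof is simply a terser statement of this reduction, so your extra detail on the substitution and the state-space rewriting only fills in steps the paper leaves implicit.
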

\begin{proof}
From \textit{Theorem 9}, it can be checked that under the condition stated in \eqref{ineqprf}, there exist stabilizing matrices $D_1$ and $D_2$ such that all of the eigenvalues of $A$ lies in left half of complex plane. Therefore the proposed control law \eqref{didcntrlxi} will stabilize the system \eqref{comdideq}.
\end{proof}

\begin{theorem}
\label{th11}
$\lim_{t\rightarrow \infty} \xi_e=0$ for the system 
\begin{equation}
\label{sys2}
\dot{\xi}=u
\end{equation}
with the control law
\begin{equation}
\label{cntrl2}
u=-K_{p\xi}\xi_e-K_{v\xi}\dot{\xi}_e+\dot{\xi}_d-\bigtriangledown\Phi_{\xi}
\end{equation}
\end{theorem}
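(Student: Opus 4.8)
The plan is to turn the closed loop into an error system in $\xi_e$ and then conclude convergence exactly as in Theorem \ref{th8}, the only genuinely new feature being that a velocity-error feedback is now applied to a first-order plant. First I would substitute the control law \eqref{cntrl2} into the plant \eqref{sys2}. Writing $\xi_e=\xi-\xi_d$ so that $\dot\xi=\dot\xi_e+\dot\xi_d$, the feedforward term $\dot\xi_d$ cancels and one is left with
\begin{equation}
\dot\xi_e=-K_{p\xi}\xi_e-K_{v\xi}\dot\xi_e-\bigtriangledown\Phi_\xi .
\end{equation}
Since a PD-type law acts on a single integrator, the term $K_{v\xi}\dot\xi_e$ appears on the right and the system is implicit in $\dot\xi_e$; collecting these terms gives $(I+K_{v\xi})\dot\xi_e=-K_{p\xi}\xi_e-\bigtriangledown\Phi_\xi$.

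Next I would solve for $\dot\xi_e$. Recall that \textbf{Algorithm 2} sets $D_1=D$ and $D_2=\gamma D$ with $\gamma>0$, so that $K_{p\xi}=D\Phi^{-1}$ and $K_{v\xi}=\gamma D\Phi^{-1}$, and the eigenvalues of $K_{v\xi}$ inherit the open-right-half-plane placement of those of $D\Phi^{-1}$. In particular $-1$ is not an eigenvalue of $K_{v\xi}$, so $I+K_{v\xi}$ is invertible and the error obeys
\begin{equation}
\dot\xi_e=-(I+K_{v\xi})^{-1}K_{p\xi}\,\xi_e-(I+K_{v\xi})^{-1}\bigtriangledown\Phi_\xi .
\end{equation}
This is exactly the closed-loop form treated in Theorem \ref{th8}, with the effective gain $\widetilde K_\xi:=(I+K_{v\xi})^{-1}K_{p\xi}$ playing the role of $K_\xi$ and the modulated gradient $(I+K_{v\xi})^{-1}\bigtriangledown\Phi_\xi$ playing the role of $\bigtriangledown\Phi_\xi$.

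I would then close the argument as in Theorem \ref{th8}. By the piecewise definition \eqref{gradc}, each pairwise coefficient $p_{z_{ij}}$ switches off once the agents separate beyond the detection radius $R$, so that $P_\xi$ and hence $\bigtriangledown\Phi_\xi$ tend to zero as $t\to\infty$ and the residual dynamics reduce to the homogeneous system $\dot\xi_e=-\widetilde K_\xi\xi_e$. The eigenvalue check is clean here because $K_{p\xi}=D\Phi^{-1}$ and $K_{v\xi}=\gamma D\Phi^{-1}$ commute and share eigenvectors: if $\sigma$ is an eigenvalue of $D\Phi^{-1}$, then $\widetilde K_\xi$ has eigenvalue $\sigma/(1+\gamma\sigma)$, whose real part is $\big(Re(\sigma)+\gamma|\sigma|^2\big)/|1+\gamma\sigma|^2$. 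With $Re(\sigma)>0$ and $\gamma>0$ this is positive, so $-\widetilde K_\xi$ is Hurwitz and $\lim_{t\to\infty}\xi_e=0$.

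The main obstacle is not the linear eigenvalue step, which the commuting structure renders elementary and in fact simpler than the double-integrator condition \eqref{ineqprf}, but the reduction $\bigtriangledown\Phi_\xi\to0$. This is a LaSalle-type claim about the transient: one must exclude a trajectory that remains inside the detection radius at a non-target configuration, where the avoidance gain of \eqref{gradc} never vanishes. The expected remedy is that outside radius $R$ the flow is governed by the Hurwitz linear part, which contracts $\xi_e$, so avoidance can be active only on a bounded excursion; making this invariance argument rigorous, rather than heuristic, is the delicate point.
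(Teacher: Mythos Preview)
Your skeleton matches the paper's: substitute the control into the plant, let the collision term vanish as $t\to\infty$, and then argue that the residual linear system is stable. Where you diverge is in the linear step. The paper does not invert $I+K_{v\xi}$; after writing the implicit residual $\dot\xi_e=-K_{p\xi}\xi_e-K_{v\xi}\dot\xi_e$, it simply appeals to the double-integrator machinery of Theorem~9/10, invoking the block matrix $A$ of \eqref{didMat} and the condition \eqref{ineqprf} on $D_1,D_2$ to conclude stability. You instead solve the implicit equation explicitly, obtain the effective gain $\widetilde K_\xi=(I+K_{v\xi})^{-1}K_{p\xi}$, and compute its eigenvalues as $\sigma/(1+\gamma\sigma)$ via the commuting structure $K_{v\xi}=\gamma K_{p\xi}$. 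Your route is more elementary and in fact sharper for the stated first-order plant: you need only $\mathrm{Re}(\sigma)>0$ and $\gamma>0$, whereas the paper's appeal to the $2n\times 2n$ matrix $A$ and to \eqref{ineqprf} is tailored to the genuine second-order dynamics \eqref{didMat} and does not transparently apply to the single-integrator implicit form. What your approach buys is a self-contained spectral argument; what the paper's buys is uniformity with the double-integrator section, at the cost of a somewhat mismatched citation.

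On the gap you flag: you are right that the reduction $\bigtriangledown\Phi_\xi\to 0$ is asserted rather than proved, in the paper as well as in your proposal. The paper simply states ``$p_{z_{ij}}\to 0$ as $t\to\infty$'' and moves on, exactly as in the proof of Theorem~\ref{th8}; no LaSalle or invariance argument is supplied. So your identification of this as the delicate point is accurate, but it is a gap shared with the paper's own argument rather than one peculiar to your route.
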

\begin{proof}
\label{prf11}
The control law of \eqref{cntrl2} will give the closed loop dynamics of \eqref{didclsd}. As $p_{z_{ij}}\rightarrow 0$ as $t\rightarrow \infty$, the closed loop dynamics of \eqref{didclsd} will reduce to
\begin{equation}
\label{clsdwp2}
\dot{\xi}_e=-K_{p\xi}\xi_e-K_{v\xi}\dot{\xi}_e
\end{equation}
as $P_{\xi}\rightarrow 0$ or $\bigtriangledown\Phi_{\xi}\rightarrow 0$. Then if the eigenvalues of the matrix $K_{p\xi}=D_1\Phi^{-1}$ and $K_{p\xi}=D_1\Phi^{-1}$ are selected by choosing $D_1$ and $D_2$ as per the condition stated in \textit{Theorem 10}, such that the eigenvalues of matrix $A$ of \eqref{didMat} lie in the left half of s-plane, then the control law \eqref{cntrl2} will stabilize the system \eqref{sys2}
\end{proof}

\section{Simulation Results}
\begin{figure}[h]
\begin{center}
\includegraphics[width=8cm, height=6cm]{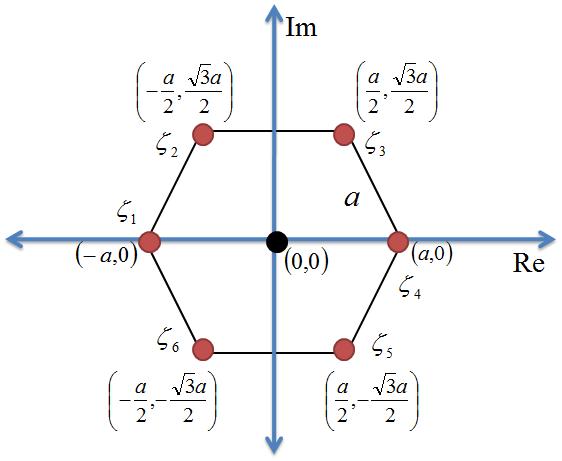}    % The printed column  
\caption{Gaius Julius Caesar, 100--44 B.C.}  % width is 8.4 cm.
\label{fig1}                                 % Size the figures 
\end{center}                                 % accordingly.
\end{figure}
The simulations are carried out on $6$ agents with single integrator kinematics of real and complex dynamics. The details of each simulation is given in the following subsections. The Jacobi vectors of $6$ agents of real and complex dynamics \cite{Yang:12} are given below
\begin{equation}
\label{jack7}
\begin{split}
\xi_1&=\mu_1(x_2-x_1) \ ; \
\xi_2=\mu_2[x_3-\frac{1}{2}(x_1+x_2)] \\
\xi_3&=\mu_3[x_4-\frac{1}{3}(x_+x_2+x_3)] \\
\xi_4&=\mu_4(x_5-\frac{1}{4}(x_1+x_2+x_3+x_4)] \\
\xi_5&=\mu_5(x_6-\frac{1}{5}(x_1+x_2+x_3+x_4+x_5)] \\
\xi_c&=\frac{1}{6}(x_1+x_2+x_3+x_4+x_5+x_6)
\end{split}
\end{equation}
where, $\mu_1=\frac{1}{\sqrt{2}}$, $\mu_2=\frac{\sqrt{2}}{\sqrt{3}}$, $\mu_3=\frac{\sqrt{3}}{2}$, $\mu_4=\frac{2}{\sqrt{5}}$, $\mu_5=\frac{\sqrt{5}}{\sqrt{6}}$.\\
Consider a planner formation with a formation basis defined as $z_d=a[(-1,0)$, $(-\frac{1}{2},\frac{\sqrt{3}}{2})$, $(\frac{1}{2}, \frac{\sqrt{3}}{2})$, $(1,0)$, $(\frac{1}{2}, -\frac{\sqrt{3}}{2})$, $(-\frac{1}{2}, -\frac{\sqrt{3}}{2})]^T$ for a equilateral hexagon of side $a$. Note that the scaling coefficient for a planner formation $a$ could be either constant or time varying, e.g., $a=Bsin(\omega t)$. In the transformed domain the formation basis is defined as $\xi_d=[\xi_{sd}^T,\xi_{cd}]^T$, where, $\xi_{sd}=[\xi_{1d},...,\xi_{5d}]^T=\Phi_sz_d$ with $\Phi=[\Phi_s^T,\Phi_c^T]^T$ as follows
\begin{equation}
\label{phi}
\Phi=\begin{bmatrix}
\Phi_s \\ \hline \Phi_c
\end{bmatrix}=\begin{bmatrix}
\frac{-1}{\sqrt{2}} & \frac{1}{\sqrt{2}} & 0 & 0 & 0 & 0 \\
0 & 0 & \frac{1}{\sqrt{2}} & \frac{-1}{\sqrt{2}} & 0 & 0 \\
0 & 0 & 0 & 0 & \frac{1}{\sqrt{2}} & \frac{-1}{\sqrt{2}} \\
\frac{-1}{{2}} & \frac{-1}{{2}} & \frac{1}{{2}} & \frac{1}{{2}} & 0 & 0 \\
\frac{1}{{4}} & \frac{1}{{4}} & \frac{1}{{4}} & \frac{1}{{4}} & \frac{-1}{{2}} & \frac{-1}{{2}} \\ \hline
\frac{1}{{6}} & \frac{1}{{6}} & \frac{1}{{6}} & \frac{1}{{6}} &\frac{1}{{6}} & \frac{1}{{6}}
\end{bmatrix}
\end{equation}
$\xi_{cd}=t+\imath sint$ is the desired trajectory of the centroid. It can be checked that the leading principle minors of $\Phi^{-1}$ are nonzero and the eigenvalues are not all positive. Then, by theorem, there exists a stabilizing matrix $D=[-1.4140-1.4140\iota,-2.4498+2.4498\iota,-4.6189-2.3095\iota,-3.7244+1.8622\iota,- 1.6117\iota,0.0340]^T$ such that the eigenvalues of $D\Phi^{-1}$ have positive real parts. Therefore the $6$ agents asymptotically reach the given[] planner formation. A simulation of a moving formation with $6$ agents is shown in Fig.
\begin{figure}[h]
\begin{center}
\includegraphics[width=8cm, height=6cm]{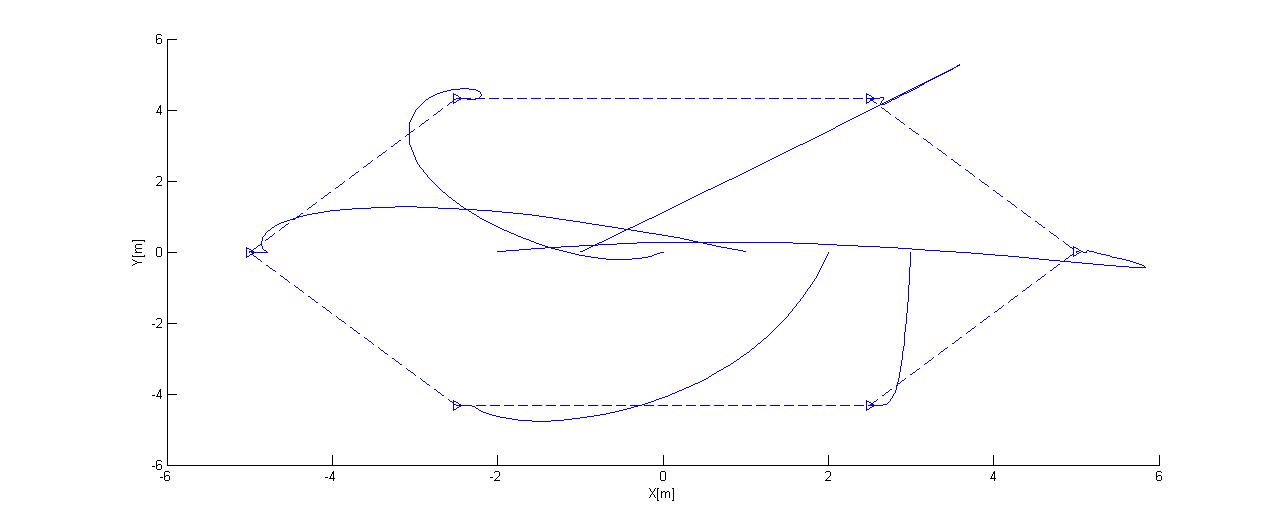}    % The printed column  
\caption{Schematic representation of multiple groups of robots}  % width is 8.4 cm.
\label{fig1}                                 % Size the figures 
\end{center}                                 % accordingly.
\end{figure}
\begin{figure}[h]
\begin{center}
\includegraphics[width=8cm, height=6cm]{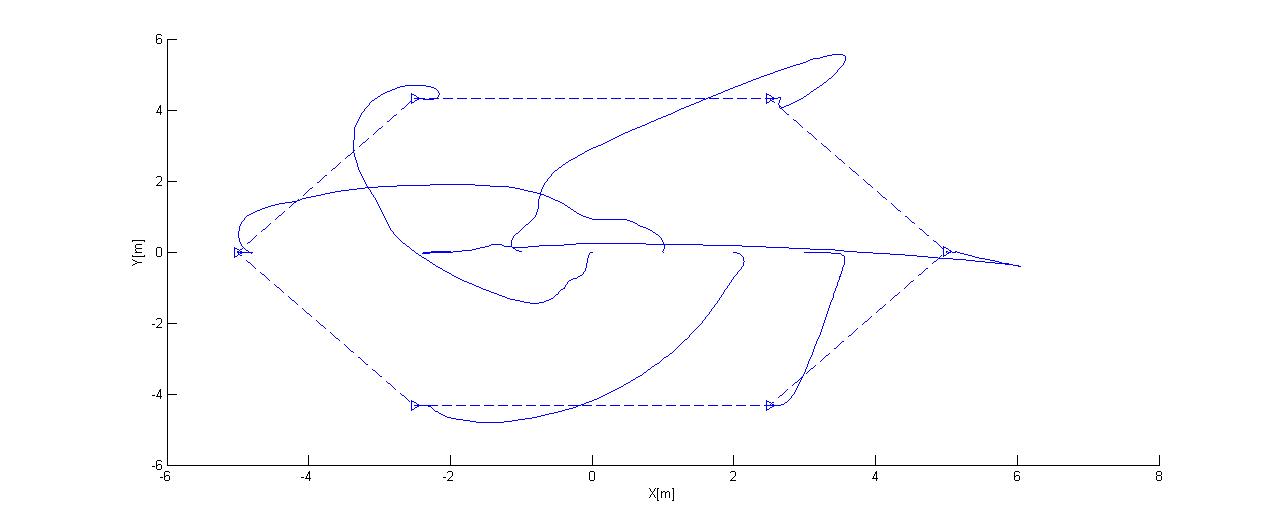}    % The printed column  
\caption{Schematic representation of multiple groups of robots}  % width is 8.4 cm.
\label{fig1}                                 % Size the figures 
\end{center}                                 % accordingly.
\end{figure}

\section{Conclusion}
This paper addresses two different aspects of planner formation for a set of complex single integrator systems. Therefore the dimension of the system reduces to half. The first problem is a stabilization problem associated with a real linear transformation and the second one is to find a sophisticated way to carry out stability analysis in the transformed domain for the system subjected to collision avoidance.  Instead of choosing the gain directly, we write the system of equations in the transformed domain such that the inverse of transformation appears in the closed loop dynamics. This way the stabilization problem is formulated. The inverse of transformation must have nonzero eigenvalues with both positive and negative real parts. This may lead to instability of the system. Therefore, the inverse of transformation is pre-multiplied by a diagonal stabilizing matrix so that all the eigenvalues of the product is reassigned with positive real parts. Hence the system becomes stable. Then we define \textit{matrix of potential} and formulate a matrix of potential in the transformed domain using the matrix of potential in actual domain to get the same effect of collision avoidance in the transformed domain. This means that we give the structure of collision avoidance controller in the transformed domain. Therefore we do not need to add the collision avoidance controller going back to the actual domain from the transformed domain after designing a stabilizing controller in the transformed domain. We provided a systematic way to find a diagonal stabilizing matrix. Simulation results are given to support our claim.

% conference papers do not normally have an appendix

% use section* for acknowledgement
\section*{Acknowledgment}

The authors would like to thank...

% trigger a \newpage just before the given reference
% number - used to balance the columns on the last page
% adjust value as needed - may need to be readjusted if
% the document is modified later
%\IEEEtriggeratref{8}
% The "triggered" command can be changed if desired:
%\IEEEtriggercmd{\enlargethispage{-5in}}

% references section

% can use a bibliography generated by BibTeX as a .bbl file
% BibTeX documentation can be easily obtained at:
% http://www.ctan.org/tex-archive/biblio/bibtex/contrib/doc/
% The IEEEtran BibTeX style support page is at:
% http://www.michaelshell.org/tex/ieeetran/bibtex/
\bibliographystyle{IEEEtran}
% argument is your BibTeX string definitions and bibliography database(s)
\bibliography{mybib}

% Generated by IEEEtran.bst, version: 1.13 (2008/09/30)
\begin{thebibliography}{10}
\providecommand{\url}[1]{#1}
\csname url@samestyle\endcsname
\providecommand{\newblock}{\relax}
\providecommand{\bibinfo}[2]{#2}
\providecommand{\BIBentrySTDinterwordspacing}{\spaceskip=0pt\relax}
\providecommand{\BIBentryALTinterwordstretchfactor}{4}
\providecommand{\BIBentryALTinterwordspacing}{\spaceskip=\fontdimen2\font plus
\BIBentryALTinterwordstretchfactor\fontdimen3\font minus
  \fontdimen4\font\relax}
\providecommand{\BIBforeignlanguage}[2]{{%
\expandafter\ifx\csname l@#1\endcsname\relax
\typeout{** WARNING: IEEEtran.bst: No hyphenation pattern has been}%
\typeout{** loaded for the language `#1'. Using the pattern for}%
\typeout{** the default language instead.}%
\else
\language=\csname l@#1\endcsname
\fi
#2}}
\providecommand{\BIBdecl}{\relax}
\BIBdecl

\bibitem{Saber:07}
R.~Olfati-Saber, J.~A. Fax, and R.~M. Murray, ``Consensus and cooperation in
  networked multi-agent systems,'' \emph{Proceedings of the IEEE}, vol. 95(1),
  pp. 215--233, 2007.

\bibitem{Reyn:87}
C.~Reynolds, ``Flocks, herds, and schools: A distributed behavioural model,''
  \emph{Computer Graphics}, vol. 21(4), pp. 25--34, 1987.

\bibitem{Balch:98}
T.~Balch and R.~C. Arkin, ``Behaviour-based formation control for multi robot
  systems,'' \emph{IEEE Transactions on Robotics and Automation}, vol. 14(6),
  pp. 926--939, 1998.

\bibitem{Reif:99}
J.~H. Reif and H.~Wang, ``Social potential fields: A distributed behavioral
  control for autonomous robots,'' \emph{Robotics and Autonomous Systems}, vol.
  27(3), pp. 171--194, 1999.

\bibitem{Desai:01}
J.~P. Desai, V.~Kumar, and J.~P. Ostrowski, ``Modeling and control of
  formations of nonholonomic mobile robots,'' \emph{IEEE Transactions on
  Robotics and Automation}, vol. 17(6), pp. 905--908, 2001.

\bibitem{Tanner:04}
H.~Tanner, G.~J. Pappas, and V.~Kumar, ``Leader-to-formation stability,''
  \emph{IEEE Transactions on Robotics and Automation}, vol.~20, pp. 443--455,
  2004.

\bibitem{Leonard:01}
N.~E. Leonard and E.~Fiorelli, ``Virtual leaders, artificial potentials and
  coordinated control of groups,'' \emph{Proceedings of IEEE International
  Conference on Decision and Control}, vol.~3, pp. 2968--2973, 2001.

\bibitem{Lewis:97}
M.~A. Lewis and K.~H. Tan, ``High precision formation control of mobile robots
  using virtual structures,'' \emph{Autonomous Robots}, vol.~4, pp. 387--403,
  1997.

\bibitem{Eger:01}
M.~Egerstedt and X.~Hu, ``Formation constrained multi-agent control,''
  \emph{IEEE Transactions on Robotics and Automation}, vol.~17, pp. 947--951,
  2001.

\bibitem{Ren:04}
W.~Ren and R.~W. Beard, ``Formation feedback control for multiple spacecraft
  via virtual structures,'' \emph{IEE Proceedings - Control Theory and
  Applications}, vol. 151(3), pp. 357--368, 2004.

\bibitem{Gazi:05}
V.~Gazi, ``Swarms aggregation using artificial potentials and sliding mode
  control,'' \emph{IEEE Transcations on Robotics}, vol. 21(4), pp. 1208--1214,
  2005.

\bibitem{Per:08}
A.~R. Pereira and L.~Hsu, ``Adaptive formation control using artificial
  potentials for euler-lagrange agents,'' \emph{In Proceedings of the 17th IFAC
  world congress}, vol. 17(1), pp. 10\,788--10\,793, 2008.

\bibitem{Zav:07}
M.~Zavlanos and G.~J. Pappas, ``Potential fields for maintaining connectivity
  of mobile networks,'' \emph{IEEE Transactions on Robotics}, vol. 23(4), pp.
  812--816, 2007.

\bibitem{Saber:02}
R.~Olfati-Saber and R.~M. Murray, ``Graph rigidity and distributed formation
  stabilization of multi-vehicle systems,'' \emph{Proceedings of the 41st IEEE
  Conference on Decision and Control}, vol.~3, pp. 2965--2971, 2002.

\bibitem{Zhang:03}
F.~Zhang, M.~Goldgeier, and P.~S. Krishnaprasad, ``Control of small formations
  using shape coordinates,'' \emph{Proceedings of IEEE International Conference
  on Robotics and Automation}, vol.~2, pp. 2510--2515, 2003.

\bibitem{Zhang:10}
F.~Zhang, ``Geometric cooperative control of particle formations,'' \emph{IEEE
  Transactions on Automatic Control}, vol.~55, pp. 800--804, 2010.

\bibitem{Yang:12}
H.~Yang and F.~Zhang, ``Robust control of formation dynamics for autonomous
  underwater vehicles in horizontal plane,'' \emph{Journal of Dynamic Systems,
  Measurement, and Control}, vol. 134, pp. 031\,009--1--031\,009--7, 2012.

\bibitem{Silvia:11}
S.~Mastellone, J.~S. Mej{\' i}a, D.~M. Stipanovi{\' c}, and M.~W. Spong,
  ``Formation control and coordinated tracking via asymptotic decoupling for
  lagrangian multi-agent systems,'' \emph{Automatica}, vol. 47(11), pp.
  2355--2363, 2011.

\bibitem{Chu:98}
M.~Chu, ``Inverse eigenvalue problems,'' \emph{SIAM Review}, vol. 40(1), pp.
  1--39, 1998.

\bibitem{Fried:75}
S.~Friedland, ``On inverse multiplicative eigenvalue problems for matrices,''
  \emph{Linear Algebra and its Applications}, vol. 12(2), pp. 127--137, 1975.

\bibitem{Rimon:92}
E.~Rimon and D.~E. Koditschek, ``Exact robot navigation using artificial
  potential functions,'' \emph{IEEE Transactions on Robotics and Automation},
  vol. 8(5), pp. 501--517, 1992.

\bibitem{Saber:06}
R.~Olfati-Saber, ``Flocking for multi-agent dynamic systems: algorithms and
  theory,'' \emph{IEEE Transactions on Automatic Control}, vol. 51(3), pp.
  401--420, 2006.

\bibitem{Ge:00}
S.~S. Ge and Y.~J. Cui, ``New potential functions for mobile robot path
  planning,'' \emph{IEEE Transactions on Robotics and Automation.}, vol. 16(5),
  pp. 615--620, 2000.

\bibitem{Khatib:85}
O.~Khatib, ``Real-time obstacle avoidance for manipulators and mobile robots,''
  \emph{Proceedings. 1985 IEEE International Conference on Robotics and
  Automation.}, vol.~2, pp. 500--505, 1985.

\bibitem{Dimaro:06}
D.~V. Dimarogonas, S.~G. Loizou, K.~J. Kyriakopoulos, and M.~M. Zavlanos, ``A
  feedback stabilization and collision avoidance scheme for multiple
  independent non-point agents,'' \emph{Automatica}, vol. 42(2), pp. 229--243,
  2006.

\bibitem{Stip:07}
D.~M. Stipanovi{\' c}, P.~F. Hokayem, M.~W. Spong, and D.~D. {\u S}iljak,
  ``Cooperative avoidance control for multi-agent systems,'' \emph{Journal of
  Dynamic Systems, Measurement, and Control}, vol. 129, pp. 699--707, 2007.

\bibitem{Xie:85}
X.~Xie, ``Stable polynomials with complex coefficients,'' \emph{24th IEEE
  conference on Decision and Control}, pp. 324--325, 1985.

\end{thebibliography}
%
% <OR> manually copy in the resultant .bbl file
% set second argument of \begin to the number of references
% (used to reserve space for the reference number labels box)
%\begin{thebibliography}{1}
%
%\bibitem{IEEEhowto:kopka}
%H.~Kopka and P.~W. Daly, \emph{A Guide to \LaTeX}, 3rd~ed.\hskip 1em plus
%  0.5em minus 0.4em\relax Harlow, England: Addison-Wesley, 1999.
%
%\end{thebibliography}

% that's all folks
\end{document}